\begin{document} 
 
\numberwithin{equation}{section} 
\newtheorem{example}{Example}[section] 
\newtheorem{theorem}{Theorem}[section] 
\newtheorem{corollary}{Corollary}[section] 
\newtheorem{rem}{Remark}[section] 
\newtheorem{defi}{Definition}[section] 
\newtheorem{conj}{Conjecture}[section]

\newenvironment{remark}[1][Remark]{\begin{trivlist} 
\item[\hskip \labelsep {\bfseries #1}]}{\end{trivlist}} 

\newcommand{\cO}{\mathcal{O}} 
\newcommand{\PP}{\mathbb{P}} 
\newcommand{\Fuk}{\text{\sf Fuk}} 
\newcommand{\C}{\mathbb{C}} 
\newcommand{\FS}{\text{\sf FS}} 
\newcommand{\LG}{\text{\sf LG}} 
\newcommand{\tritime}{\text{\Clocklogo}} 
 
\title{Super Landau-Ginzburg mirrors and algebraic cycles} 
 
\author[a]{Richard S. Garavuso,}  
 
\author[b,c]{Ludmil Katzarkov,} 
 
\author[d]{Maximilian Kreuzer} 
 
\author[b]{and Alexander Noll} 
 
\affiliation[a]{Department of Mathematical and Statistical Sciences,  
                University of Alberta  
\\ 
                632 Central Academic Building;  
                Edmonton, Alberta T6G 2G1; Canada} 
 
\affiliation[b]{Department of Mathematics, Universit\"{a}t Wien  
\\ 
                Garnisongasse 3, Vienna A-1090; Austria} 
 
\affiliation[c]{Department of Mathematics, University of Miami 
\\ 
                PO Box 249085; Coral Gables, FL 33124-4250; United States}  
 
\affiliation[d]{Institut f\"{u}r Theoretische Physik,  
                Technische Universit\"{a}t Wien 
\\ 
                Wiedner Hauptstrasse 8-10/136; Vienna A-1040; Austria} 
 
\emailAdd{garavuso@ualberta.ca}  
\emailAdd{lkatzark@math.uci.edu} 
\emailAdd{nolla5@univie.ac.at} 
 
\abstract{We investigate the super Landau-Ginzburg mirrors of gauged 
linear sigma models which, in an appropriate low energy limit, reduce 
to nonlinear sigma models with K\"{a}hler supermanifold target spaces of 
nonnegative super-first Chern class.} 
 
\keywords{Conformal Field Models in String Theory, Sigma Models, String Duality} 
 
\arxivnumber{} 
 
\dedicated{Dedicated to our dear friend and coauthor Maximilian 
Kreuzer, who died November 26, 2010.}

\maketitle 
 
\flushbottom 
 
\section{Introduction} 
 
A K\"{a}hler manifold $ M $ with nonnegative first Chern class can be  
described in terms of a $ (2,2) $ supersymmetric gauged linear sigma model in  
$ 1 + 1 $ dimensions \cite{Witten:Phases,MorrisonPlesser:Summing}. 
That is, in an appropriate low energy limit, the gauge theory reduces to a  
nonlinear sigma model with resolved target space $ M $. 
Hori and Vafa \cite{HoriVafa:Mirror} proved that the gauged linear sigma model  
corresponding to $ M $ is mirror to a Landau-Ginzburg theory. 
When $ M $ is a Calabi-Yau manifold, the mirror Landau-Ginzburg  
theory can sometimes be given a geometrical interpretation in terms of a  
nonlinear sigma model with Calabi-Yau resolved target space $ \widetilde{M} $. 
Here, the manifold $ \widetilde{M} $ is the mirror of $ M $. 
 
A rigid Calabi-Yau manifold has no complex structure moduli. 
The mirror of such a manifold has no K\"{a}hler moduli and hence cannot be a  
K\"{a}hler manifold in the conventional sense. 
Thus, K\"{a}hler manifolds cannot be the most general geometrical framework 
for understanding mirror symmetry. 
The first progress towards generalizing this framework came when it was  
suggested that higher-dimensional Fano varieties could provide the mirrors for 
rigid Calabi-Yau manifolds  
\cite{Schimmrigk,CandelasDerrickParkes:Generalized}. 
Later progress came when Sethi \cite{Sethi} proposed a general correspondence  
between orbifolds of (2,2) supersymmetric Landau-Ginzburg theories in  
$ 1 + 1 $ dimensions with integral $ \hat{c} \equiv c/3 $ (where $ c $ is  
the central charge) and nonlinear sigma models.     
Here, the resolved target space of the nonlinear sigma model is either a Calabi-Yau  
manifold or a Calabi-Yau \emph{super}manifold. 
Using this proposal, Sethi argued that the mirror of a rigid Calabi-Yau 
manifold is a Calabi-Yau supermanifold and that supervarieties are the proper  
geometrical framework of mirror symmetry. 
 
In the absence of a proper supercohomology theory, Sethi used heuristic  
arguments for computing the Hodge numbers of Calabi-Yau supermanifolds 
realized (at least in principle) as crepant resolutions of hypersurfaces in  
weighted complex superprojective spaces. 
Using these same heuristic arguments, the geometrical interpretations  
associated with Sethi's proposed correspondence were investigated in  
\cite{GaravusoKreuzerNoll:Fano}. 
As might be expected, it was found that the supermanifold Hodge numbers  
obtained by using these heuristic arguments do not always agree with those of 
the corresponding Landau-Ginzburg orbifold. 
 
Aganagic and Vafa \cite{AganagicVafa:Mirror} showed that when $ M $ is a  
Calabi-Yau supermanifold realized as a crepant resolution of a  
weighted complex superprojective space, the corresponding gauged linear  
sigma model is mirror to a \emph{super} Landau-Ginzburg theory. 
This relation should also hold when $ M $ is any K\"{a}hler supermanifold 
with nonnegative super-first Chern class. 
In the super Calabi-Yau case, one expects to obtain a geometrical intepretation 
of the super Landau-Ginzburg mirror, i.e. a nonlinear sigma model with super 
Calabi-Yau resolved target space $ \widetilde{M} $. 
Various examples of such geometrical interpretations are given in  
\cite{AganagicVafa:Mirror,Ahn:Mirror,BelhajDrissiRasmussenSaidiSebbar:Toric, 
Ricci:Super,AhlBelhajDrissiSaidi:On}. 
 
In this paper, we will discuss the super Landau-Ginzburg mirrors of gauged  
linear sigma models corresponding to K\"{a}hler supermanifolds with nonnegative 
super-first Chern class. 
In Sections \ref{WCP}, \ref{Tot}, \ref{WCP[s]}, \ref{WCP[s_1,...,s_l]}, and  
\ref{T[s_A1,...,s_Al]}, respectively, we will consider the cases in which the  
K\"{a}hler  supermanifold is realized as a crepant resolution of 
\begin{enumerate} 
 
\item[(i)]  
a weighted complex superprojective space $ \mathbf{WCP}^{m-1|n} $, 
\item[(ii)] 
$ \mathrm{Tot}  
  \left(  
         \mathcal{O}(-s) \rightarrow\mathbf{WCP}^{m-1|n}  
  \right) $, 
 
\item[(iii)] 
a hypersurface in $ \mathbf{WCP}^{m-1|n} $, 
 
\item[(iv)] 
a complete intersection in $ \mathbf{WCP}^{m-1|n} $, and 
 
\item[(v)] 
a complete intersection in a general toric supervariety. 
 
\end{enumerate}
For cases (i) - (iv), the unresolved variety corresponds to a gauged linear sigma model with 
$ U(1) $ gauge group. 
For case (iii), when the super Calabi-Yau condition is satisfied, we will give 
a geometrical interpretation of the super Landau-Ginzburg mirror. 
Our analysis here will include two examples which can be used to test Sethi's proposed 
correspondence using the recently developed techniques of \cite{GaravusoKatzarkovNoll}. 
For case (v), we will show that the mirror theory can be expressed as a super  
Landau-Ginzburg theory on a noncompact supermanifold. 
In Section \ref{PeriodRelations}, we will establish some relations between  
periods of mirrors of gauged linear sigma models corresponding to various  
geometries. 
These relations generalize results which were obtained in  
\cite{Schwarz:Sigma} by working with A-models. \pagebreak
In Section \ref{Categorical}, we will give a categorical interpretation of the 
material presented in Sections \ref{WCP} - \ref{PeriodRelations}. 
Finally, in the Appendix, we will prove a theorem and corollary concerning the 
quasihomogeneity of the mirror superpotential for case (iii).

\section{\label{WCP}\texorpdfstring{$\mathbf{WCP}^{m-1|n}$}{WCPm-1|n}} 
 
Consider a weighted complex superprojective space 
$ \mathbf{WCP}^{m-1|n}_{(Q_1,\ldots,Q_m|q_1,\ldots,q_n)} $ 
with $ m $ homogeneous bosonic coordinates $ \phi_i $ having weights  
$ Q_i $, where $ i=1,\ldots,m $, and $ n $ homogeneous fermionic coordinates  
$ \xi_a $ having weights $ q_a $, where $ a = 1,\ldots,n $, i.e. 
\begin{equation*} 
\left(  
       \phi_1,\ldots,\phi_m | \xi_1,\ldots,\xi_n 
\right) 
\simeq 
\left( 
       \lambda^{Q_1}\phi_1,\ldots,\lambda^{Q_m}\phi_m |  
       \lambda^{q_1}\xi_1,\ldots,\lambda^{q_n}\xi_n 
\right), 
\quad 
\lambda \in \mathbf{C}^{*} \, .  
\end{equation*} 
A K\"{a}hler supermanifold may be obtained as a crepant resolution of this  
supervariety if a crepant resolution exists. 
Such a supermanifold would have nonnegative super-first Chern class when 
\begin{equation} 
\label{c_1(WCP)>=0} 
\sum_{i=1}^m Q_i - \sum_{a=1}^n q_a \ge 0 \, . 
\end{equation} 
We will assume that a crepant resolution exists, the condition 
(\ref{c_1(WCP)>=0}) is satisfied, and the bosonic and fermionic weights are  
positive integers. 
The K\"{a}hler supermanifold that we obtain can be described as the resolved target  
space of a nonlinear sigma model phase of a $ (2,2) $ supersymmetric $ U(1) $ gauged 
linear sigma model with $ m $ bosonic chiral superfields $ \Phi_i $  
having $ U(1) $ charges $ Q_i \, , $ $ i=1,\ldots,m \, , $ and $ n $ fermionic  
chiral superfields $ \Xi_a $ having $ U(1) $ charges $ q_a \, , $  
$ a = 1,\ldots,n \, . $  
The classical Lagrangian is \cite{SekiSugiyama:Gauged} 
\begin{align} 
\label{L_WCP} 
L_{\mathbf{WCP}^{m-1|n}_{(Q_1,\ldots,Q_m|q_1,\ldots,q_n)}} 
 &= \int d^4 \theta 
    \left( 
            \sum_{i=1}^m \overline{\Phi}_i e^{2 Q_i V} \Phi_i 
          + \sum_{a=1}^n \overline{\Xi}_a e^{2 q_a V} \Xi_a 
          - \frac{1}{2 e^2} \overline{\Sigma} \Sigma 
    \right) 
\nonumber 
\\ 
  &\phantom{ =\ } 
  - \frac{1}{2} 
    \left( 
           t \int d^2 \tilde{\theta} \, \Sigma + c.c. 
    \right), 
\end{align} 
where $ \Sigma = \overline{D}_{+} D_{-} V $ is the twisted chiral field  
strength of the $ U(1) $ vector superfield $ V $, $ e $ is the $ U(1) $ gauge  
coupling, and $ t = r - i \vartheta $ is the complexified Fayet-Iliopoulos 
parameter. 
The nonlinear sigma model phase is realized in the low energy limit with  
$ r >> 0 $ with $ \sigma = 0 $ and target space   
\begin{equation} 
     \frac{ 
            \left\{ ( \phi_1,\ldots,\phi_m | \xi_1,\ldots,\xi_n ) 
                    \left| 
                           \sum_{i=1}^m Q_i \left| \phi_i \right |^2 
                         + \sum_{a=1}^n q_a \left| \xi_a \right |^2 
                         = r 
                    \right. 
            \right\} 
          }{U(1)} \, . 
\end{equation} 
Here, $ \sigma $, $ \phi_i $ and $ \xi_a $ are respectively the lowest 
components of $ \Sigma $, $ \Phi_i $,  and $ \Xi_a $. 
 
The gauge theory is super-renormalizable with respect to $ e $.  
When the super Calabi-Yau condition 
\begin{equation} 
\label{SCYcondition} 
\sum_{i=1}^m Q_i - \sum_{a=1}^n q_a = 0 
\end{equation} 
is not satisfied, to cancel a one-loop ultraviolet divergence, the  
Fayet-Iliopoulos parameter $ r $ must be renormalized as 
\begin{equation*} 
r(\mu) = r(\Lambda_{UV}) 
       + \left( 
                \sum_{i=1}^m Q_i - \sum_{a=1}^n q_a 
         \right) 
         \ln{ 
              \left( 
                     \frac{\mu}{\Lambda_{UV}} 
              \right) 
            } \, , 
\end{equation*} 
where $ r(\mu) $ is the renormalized Fayet-Iliopoulos parameter at the scale  
$ \mu $, $ \Lambda_{UV} $ is the ultraviolet cutoff, and 
\begin{equation*} 
r(\Lambda_{UV}) 
   = \left( 
            \sum_{i=1}^m Q_i - \sum_{a=1}^n q_a 
     \right) 
     \ln{ 
          \left( 
                 \frac{\Lambda_{UV}}{\Lambda} 
          \right)  
        } \, . 
\end{equation*} 
In this case, the dimensionless parameter $ r $ of the classical theory  
is replaced by the renormalization group invariant dynamical scale $ \Lambda $ 
in the quantum theory. 
 
The classical theory possesses a $ U(1)_V $ vector R-symmetry when $ \Sigma $ 
is assigned vector charge $ 0 $.  It also has a $ U(1)_A $ axial R-symmetry 
when $ \Sigma $ is assigned axial charge $ 2 $. 
The vector R-symmetry is an exact symmetry of the quantum theory, but the  
axial R-symmetry is subject to a chiral anomaly. 
An axial rotation by angle $ \alpha $ shifts the theta angle as  
\begin{equation*} 
\vartheta  
   \rightarrow  \vartheta  
   - 2 \left( 
              \sum_{i=1}^m Q_i - \sum_{a=1}^n q_a 
       \right) \alpha \, . 
\end{equation*} 
Note that the axial anomaly vanishes when the super Calabi-Yau condition 
(\ref{SCYcondition}) is satisfied. 
 
Following the arguments of \cite{AganagicVafa:Mirror}, we obtain the super  
Landau-Ginzburg mirror period 
\begin{align} 
\label{Pi_WCP} 
\Pi &_{ \widetilde{\mathbf{WCP}}^{m-1|n}_{(Q_1,\ldots,Q_m|q_1,\ldots,q_n)} } 
\nonumber 
\\ 
 &=\int  
   \left( 
         \prod_{i=1}^m \, dY_i  
   \right) 
   \left(    
      \prod_{a=1}^n \, dX_a \, d \eta_a \, d \gamma_a  
   \right) 
   \delta \left( 
                \sum_{i=1}^m Q_i Y_i - \sum_{a=1}^n q_a X_a - t 
          \right) 
\nonumber 
\\ 
&\phantom{=\int} \times 
  \exp{ \left[ 
              - \sum_{i=1}^m  e^{-Y_i}  
              - \sum_{a=1}^n e^{-X_a}(1 + \eta_a \gamma_a) 
        \right]  
      }, 
\end{align} 
where $ Y_i $ and $ X_a $ are twisted chiral superfields (with periodicity 
$ 2 \pi i $) which satisfy 
\begin{equation} 
\mathrm{Re} \, Y_i =  \overline{\Phi}_i e^{2 Q_i V} \Phi_i \, , 
\qquad 
\mathrm{Re} \, X_a = - \overline{\Xi}_a e^{2 q_a V} \Xi_a 
\end{equation} 
and $ \eta_a $ and $ \gamma_a $ are fermionic superfields. 
In fact, (\ref{Pi_WCP}) holds whenever (\ref{c_1(WCP)>=0}) is satisfied.

\section{\label{Tot}\texorpdfstring{$ \mathrm{Tot} \left( \mathcal{O}(-s)  
\rightarrow \mathbf{WCP}^{m-1|n} \right) $}{TotO(-s)-->WCPm-1n}} 
 
Consider the noncompact supervariety  
$  
\mathrm{Tot}  
\left(  
       \mathcal{O}(-s) \rightarrow  
       \mathbf{WCP}^{m-1|n}_{ ( Q_1,\ldots,Q_m | q_1,\ldots,q_n ) }  
\right)  
$, i.e. the total space of the line bundle $ \mathcal{O}(-s) $ over the  
weighted complex superprojective space of Section \ref{WCP}. 
A K\"{a}hler supermanifold may be obtained as a crepant resolution of this  
supervariety if a crepant resolution exists. 
Such a supermanifold would have nonnegative super-first Chern class when 
\begin{equation} 
\label{c_1(Tot)>=0} 
\sum_{i=1}^m Q_i - \sum_{a=1}^n q_a -s \ge 0 \, . 
\end{equation} 
We will assume that a crepant resolution exists, the condition  
(\ref{c_1(Tot)>=0}) is satisfied, and $ s $ is a positive integer. 
To describe our K\"{a}hler supermanifold in terms of a gauged linear sigma  
model, we add to (\ref{L_WCP}) a kinetic term for a bosonic chiral superfield 
$ P $ of $ U(1) $ charge $ -s $.   
The resulting Lagrangian is 
\begin{align} 
\label{L_Tot} 
L &_{\mathrm{Tot}  
     \left(  
           \mathcal{O}(-s) \rightarrow 
           \mathbf{WCP}^{m-1|n}_{(Q_1,\ldots,Q_m|q_1,\ldots,q_n)} 
     \right) 
  } 
\nonumber 
\\     
  &\phantom{\mathrm{Tot}} 
   = \int d^4 \theta 
     \left( 
             \sum_{i=1}^m \overline{\Phi}_i e^{2 Q_i V} \Phi_i 
           + \overline{P} e^{-2 s V} P 
           + \sum_{a=1}^n \overline{\Xi}_a e^{2 q_a V} \Xi_a 
           - \frac{1}{2 e^2} \overline{\Sigma} \Sigma 
     \right) 
\nonumber 
\\ 
  &\phantom{ \mathrm{Tot} = } 
   - \frac{1}{2} 
     \left( 
            t \int d^2 \tilde{\theta} \,\Sigma + c.c. 
     \right). 
\end{align} 
The nonlinear sigma model phase is realized in the low energy limit with  
$ r \gg 0 \, , $ $ \sigma = 0 \, , $ and target space 
\begin{equation} 
\frac{ 
\left\{ ( \phi_1,\ldots,\phi_m,p | \xi_1,\ldots,\xi_n ) 
        \left| 
                \textstyle{ 
                \sum_{i=1}^m Q_i \left| \phi_i \right |^2 
              - s | p |^2  
              + \sum_{a=1}^n q_a \left| \xi_a \right |^2 
                          } 
              = r 
        \right. 
\right\} 
} 
{ U(1) } \, , 
\end{equation} 
where $ p $ is the lowest component of $ P $. 
If the super Calabi-Yau condition 
\begin{equation} 
\sum_{i=1}^m Q_i - \sum_{a=1}^n q_a = s 
\end{equation} 
is satisfied, then the Fayet-Iliopoulos parameter $ r $ does not renormalize. 
 
Extending the result (\ref{Pi_WCP}) to the present case, we obtain for    
the super Landau-Ginzburg mirror period  
\begin{align} 
\label{Pi_Tot} 
\Pi &_{ \widetilde{\mathrm{Tot}} 
        \left( 
               \mathcal{O}(-s) \rightarrow 
               \mathbf{WCP}^{m-1|n}_{(Q_1,\ldots,Q_m|q_1,\ldots,q_n)} 
        \right)  
      } 
\nonumber 
\\  
    &\phantom{\widetilde{\mathrm{Tot}}} 
     =\int 
      \left( 
            \prod_{i=1}^m dY_i 
      \right) 
      dY_P  
      \left( 
            \prod_{a=1}^n dX_a \, d\eta_a \, d \gamma_a 
      \right) 
\delta \left( 
                \sum_{i=1}^m Q_i Y_i - s Y_P 
              - \sum_{a=1}^n q_a X_a - t 
          \right) 
\nonumber 
\\ 
&\phantom{\widetilde{\mathrm{Tot}}=\int} \times 
  \exp{ \left[ 
              - \sum_{i=1}^m  e^{-Y_i} - e^{-Y_P} 
              - \sum_{a=1}^n e^{-X_a}(1 + \eta_a \gamma_a) 
        \right] 
      }, 
\end{align} 
where $ Y_P $ is a bosonic twisted chiral superfield (with periodicity  
$ 2 \pi i $) which satisfies 
\begin{equation} 
\mathrm{Re} \, Y_P = \overline{P} e^{-2 s V} P \, . 
\end{equation} 
This result will be useful in the next section where we study an  
associated compact theory.

\section{\label{WCP[s]}Hypersurface in  
\texorpdfstring{$ \mathbf{WCP}^{m-1|n} $}{WCPm-1|n}} 
 
Consider a member of the family of compact supervarieties  
$ \mathbf{WCP}^{m-1|n}_{ ( Q_1,\ldots,Q_m | q_1,\ldots,q_n ) [s] } $, i.e. 
a hypersurface defined by the zero locus of a quasihomogeneous  
polynomial $ G = G(\phi, \xi) $ of degree $ s $ in the weighted complex 
superprojective space of Section \ref{WCP}. 
A K\"{a}hler supermanifold may be obtained as a crepant resolution of this  
supervariety if a crepant resolution exists. 
Such a supermanifold would have nonnegative super-first Chern class when 
\begin{equation} 
\label{c_1(WCP[s])>=0} 
\sum_{i=1}^m Q_i - \sum_{a=1}^n q_a -s \ge 0 \, . 
\end{equation} 
We will assume that a crepant resolution exists and the condition  
(\ref{c_1(WCP[s])>=0}) is satisfied. 
To describe our K\"{a}hler supermanifold in terms of a gauged linear sigma 
model, we add to (\ref{L_Tot}) an untwisted F-term with  
superpotential of the form $ P \cdot G(\Phi,\Xi) $, where $ G(\Phi,\Xi) $ is 
a quasihomogeneous polynomial of $ U(1) $ charge $ s $.   
The resulting Lagrangian is 
\begin{align} 
\label{L_WCP[s]} 
L &_{ \mathbf{WCP}^{m-1|n}_{ ( Q_1,\ldots,Q_m | q_1,\ldots,q_n ) [s] } } 
\nonumber
\\ 
  &\phantom{\mathbf{WCP}}
   = \int d^4 \theta 
     \left( 
             \sum_{i=1}^m \overline{\Phi}_i e^{2 Q_i V} \Phi_i 
           + \sum_{a=1}^n \overline{\Xi}_a e^{2 q_a V} \Xi_a 
           + \overline{P} e^{-2 s V} P
           - \frac{1}{2 e^2} \overline{\Sigma} \Sigma 
     \right) 
\nonumber 
\\ 
  &\phantom{= \mathbf{WCP}}
   - \frac{1}{2} 
     \left( 
            t \int d^2 \tilde{\theta} \,\Sigma + c.c. 
     \right) 
   + \left( 
            \int d^2 \theta \,  
            P \cdot  
            G(\Phi,\Xi) 
          + c.c.  
     \right). 
\end{align} 
The nonlinear sigma model phase is realized in the low energy limit  
with $ r \gg 0 \, , $ $ \sigma = 0 \, , $ $ p = 0 \, , $ and target space 
\begin{equation} 
\left\{  
        G = 0 
\right\} 
   \subset 
     \frac{ 
            \left\{ ( \phi_1,\ldots,\phi_m | \xi_1,\ldots,\xi_n ) 
               \left| 
                  \textstyle{ 
                              \sum_{i=1}^m Q_i \left| \phi_i \right |^2 
                             + \sum_{a=1}^n q_a \left| \xi_a \right |^2 
                            } 
                             = r 
                    \right. 
            \right\} 
          }{U(1)} \, . 
\end{equation} 
If the super Calabi-Yau condition 
\begin{equation} 
\label{CompactSCYH-Condition} 
\sum_{i=1}^m Q_i - \sum_{a=1}^n q_a = s 
\end{equation} 
is satisfied, then the Fayet-Iliopoulos parameter $ r $ does not renormalize. 
 
Generalizing the discussion in \cite{HoriVafa:Mirror}, we can obtain the super 
Landau-Ginzburg mirror period by allowing the operator  
$ - s \, \partial / \partial t $ to act on the period given by  
(\ref{Pi_Tot}), i.e. 
\begin{align} 
\label{Pi_WCP[s]} 
\Pi &_{  
        \widetilde{\mathbf{WCP}}^{m-1|n} 
        _{ ( Q_1,\ldots,Q_m | q_1,\ldots,q_n ) [s] }   
      } 
\nonumber 
\\ 
   &=-s \frac{ \partial}{ \partial t}  
      \Pi_{ \widetilde{\mathrm{Tot}} 
            \left( 
                   \mathcal{O}(-s) \rightarrow 
                   \mathbf{WCP}^{m-1|n}_{(Q_1,\ldots,Q_m|q_1,\ldots,q_n)} 
            \right) 
          } 
\nonumber 
\\ 
   &= \int  
      \left( 
            \prod_{i=1}^m dY_i 
      \right) 
      dY_P \, e^{-Y_P}   
      \left( 
            \prod_{a=1}^n dX_a \, d\eta_a \, d \gamma_a 
      \right) 
\delta \left( 
                \sum_{i=1}^m Q_i Y_i - s Y_P    
              - \sum_{a=1}^n q_a X_a - t 
       \right) 
\nonumber 
\\ 
 &\phantom{= \int} 
  \times 
  \exp{ \left[ 
              - \sum_{i=1}^m  e^{-Y_i} -  e^{-Y_P} 
              - \sum_{a=1}^n e^{-X_a}(1 + \eta_a \gamma_a) 
        \right] 
      }.  
\end{align} 
 
Integrating (\ref{Pi_WCP[s]}) over $ Y_P $ yields 
\begin{align*} 
\Pi &_{  
        \widetilde{\mathbf{WCP}}^{m-1|n} 
        _{ ( Q_1,\ldots,Q_m | q_1,\ldots,q_n ) [s]} 
      } 
\nonumber 
\\  
  &= 
     \int 
     \left( 
            \prod_{i=1}^m dY_i 
     \right) 
     \left[ 
            e^{ \frac{t}{s} } 
            \prod_{i=1}^m \left( e^{-Y_i} \right)^{ \frac{Q_i}{s} } 
            \prod_{a=1}^n \left( e^{-X_a} \right)^{- \frac{q_a}{s}} 
     \right]  
     \left( 
            \prod_{a=1}^n dX_a \, d\eta_a \, d \gamma_a 
     \right)   
\\ 
    &\phantom{= \int} 
     \times 
     \exp{ \left[ 
                - \sum_{i=1}^m  e^{-Y_i}  
                - e^{ \frac{t}{s} } 
                  \prod_{i=1}^m \left( e^{-Y_i} \right)^{ \frac{Q_i}{s} } 
                  \prod_{a=1}^n \left( e^{-X_a} \right)^{- \frac{q_a}{s}} 
               - \sum_{a=1}^n e^{-X_a} 
                  \left( 1 + \eta_a \gamma_a \right) 
           \right] 
          }. 
\end{align*} 
Suppose there exist invertible matrices $ (M_{ji}) $ and $ (N_{ba}) $ of  
nonnegative integers such that 
\begin{equation} 
\label{MandNequations} 
s = \sum_{i=1}^m M_{ji} Q_i = \sum_{a=1}^n N_{ba} q_a \, , 
\qquad  
j = 1,\ldots,m \, ;  
\quad 
\ b = 1,\ldots,n \, . 
\end{equation} 
Now, consider the change of variables  
\begin{gather} 
e^{-Y_i} = \prod_{j=1}^m y_j^{ M_{j i} } \, , 
\qquad 
e^{-X_a} = \prod_{b=1}^{n} x_b^{N_{ba}} \, , 
\qquad 
\eta_a = x_a^{-1} \hat{\eta}_a \, , 
\qquad 
\gamma_a = x_a^{-1} \hat{\gamma}_a \, . 
\end{gather} 
This change of variables one-to-one up to the action of the group $ \Gamma $  
defined by 
\begin{equation} 
\label{Gamma} 
\Gamma : \ 
y_j \rightarrow \omega_{y_j} y_j \, , 
\qquad 
x_b \rightarrow \omega_{x_b} x_b \, , 
\qquad 
\hat{\eta}_a \rightarrow \omega_{x_a} \hat{\eta}_a \, , 
\qquad 
\hat{\gamma}_a \rightarrow \omega_{x_a} \hat{\gamma}_a \, , 
\end{equation} 
such that 
\begin{equation*} 
\prod_{j=1}^m \omega_{y_j}^{M_{ji}} = 1 \, , 
\qquad 
\prod_{b=1}^n \omega_{x_b}^{N_{ba}} = 1 \, , 
\qquad 
\prod_{j=1}^m \omega_{y_j} \prod_{b=1}^n \omega_{x_b}^{-1} = 1 \, . 
\end{equation*} 
In terms of the new variables, we obtain 
\begin{align} 
\Pi &_{  
        \widetilde{\mathbf{WCP}}^{m-1|n} 
        _{ ( Q_1,\ldots,Q_m | q_1,\ldots,q_n ) [s] } 
      } 
\nonumber 
\\ 
 &=    (-1)^{m + n}  
       \det{(M_{ji})} \det{(N_{ba})} \, e^{t/s} 
       \int 
       \left( 
              \prod_{i=1}^m dy_i 
       \right) 
       \left( 
              \prod_{a=1}^n dx_a \, d\hat{\eta}_a \, d\hat{\gamma}_a 
       \right) 
\nonumber 
\\ 
 &\phantom{=(} \times 
  \exp{ \left[ 
            - \sum_{i=1}^m  \prod_{j=1}^m y_j^{ M_{ji} } 
            - e^{t/s} \prod_{j=1}^m y_j \prod_{b=1}^n x_b^{-1} 
            - \sum_{a=1}^n  
              \left( 
                     1 + x_a^{-2} \hat{\eta}_a \hat{\gamma}_a 
              \right) 
              \prod_{b=1}^n x_b^{ N_{ba} } 
        \right] 
      }. 
\end{align} 
This is the period for the super Landau-Ginzburg orbifold  
$$ \widetilde{W} / \Gamma \, $$ where  
\begin{equation} 
\label{Wtilde} 
\widetilde{W}  
   = \sum_{i=1}^m  \prod_{j=1}^m y_j^{ M_{ji} } 
    + e^{t/s} \prod_{j=1}^m y_j \prod_{b=1}^n x_b^{-1} 
    + \sum_{a=1}^n  
       \left( 
             1 + x_a^{-2} \hat{\eta}_a \hat{\gamma}_a 
      \right) 
      \prod_{b=1}^n x_b^{ N_{ba} } 
\end{equation} 
and $ \Gamma $ is given by (\ref{Gamma}). 
 
As proven in the Appendix, $ \widetilde{W} $ is quasihomogeneous of some degree 
$ s^{\prime} $ for all values of $ t $ if and only if the super Calabi-Yau condition  
(\ref{CompactSCYH-Condition}) is satisfied. 
In this case, for appropriately chosen fermionic weights, the super  
Landau-Ginzburg orbifold $ \widetilde{W}/ \Gamma $ corresponds to a Calabi-Yau 
supermanifold obtained as a crepant resolution of 
\begin{equation} 
\label{WCP[s]mirror} 
\frac{ \left \{ \widetilde{W} = 0 \right \} }{\Gamma / \widetilde{J}} 
   \subset 
   \mathbf{WCP}^{m+n-1|2n}_{ 
   ( n_{y_1},\ldots,n_{y_m}, n_{x_1},\ldots,n_{x_n} | 
     n_{ \hat{\eta}_1 }, n_{ \hat{\gamma}_1 }, \ldots,  
     n_{ \hat{\eta}_n }, n_{ \hat{\gamma}_n } ) 
                           } \, , 
\end{equation} 
where $ \widetilde{J} $ is the diagonal subgroup of the phase symmetries of  
$ \widetilde{W} $, i.e.  
\begin{equation} 
\widetilde{J}:  
\quad 
y_j \rightarrow e^{ 2 \pi i \, n_{y_j} / s^{\prime} } y_j \, , 
\quad 
x_b \rightarrow e^{ 2 \pi i \, n_{x_b} / s^{\prime} } x_b \, , 
\quad 
\hat{\eta}_b \rightarrow e^{ 2 \pi i \, n_{ \hat{\eta}_b } / s^{\prime} } \hat{\eta}_b  
\, , 
\quad 
\hat{\gamma}_b \rightarrow e^{ 2 \pi i \, n_{ \hat{\gamma}_b } / s^{\prime} }  
\hat{\gamma}_b \, .  
\end{equation} 
\begin{example} 
\label{Counterexample} 
 
Consider the Landau-Ginzburg orbifold $ G_{bos} / J_{bos} $, where 
\begin{equation*} 
G_{bos} =   \phi_1^6 + \phi_2^6 + \phi_3^6 + \phi_4^3 + \phi_5^3 
          + \phi_6^3 + \phi_7^2 
\end{equation*} 
and $ J_{bos} $ is the diagonal subgroup of the phase symmetries of 
$ G_{bos} $, i.e. 
\begin{equation*} 
J_{bos}: \ \phi_i \rightarrow e^{2 \pi i Q_i / 6}  \phi_i \, . 
\end{equation*} 
Using the techniques of \cite{LGOHodge}, we find that the Hodge diamond of 
$ G_{bos} / J_{bos} $ is  
\begin{equation*} 
\begin{matrix} 
& & & 1 & & & \\ 
& & 0 & & 0 & & \\ 
& 0 & & 0 & & 0 & \\ 
1 & & 84 & & 84 & & 1 \\ 
& 0 & & 0 & & 0 & \\ 
& & 0 & & 0 & & \\ 
& & & 1 & & & \\ 
\end{matrix} \, . 
\end{equation*} 
The proposal of \cite{Sethi} predicts that, for appropriately chosen fermionic 
weights $ (q_1,q_2) = (q_{1*},q_{2*}) $, $ G_{bos} / J_{bos} $ corresponds to  
a Calabi-Yau supermanifold $ M^{(q_1,q_2)} $ obtained as a crepant resolution of 
\begin{equation*} 
\{ G^{(q_1,q_2)} = 0 \} \in  
\mathbf{WCP}^{6|2}_{ (1,1,1,2,2,2,3 | q_1, q_2 ) [6] }  
\end{equation*} 
where 
\begin{equation*} 
G^{(q_1,q_2)} = G_{bos} + \xi_1 \xi_2 \, . 
\end{equation*} 
The positive integer values of $ (q_1,q_2) $ consistent with the  
quasihomogenity constraint 
\begin{equation*} 
q_1 + q_2 = 6 
\end{equation*} 
are (up to a relabelling of $ \xi_1 $ and $ \xi_2 $)  
\begin{equation*} 
(q_1,q_2) \in A = \{(1,5),(2,4),(3,3) \}.  
\end{equation*} 
The solutions for $ (q_{1*},q_{2*}) $ are those values of $ (q_1,q_2) \in A $ 
for which the Hodge diamond of $ M^{(q_1,q_2)} $ agrees with the Hodge diamond 
of $ G_{bos} / J_{bos} $. 
It was found in \cite{GaravusoKreuzerNoll:Fano} that the Hodge numbers obtained 
for $ M^{(q_1,q_2)} $ by using the heuristic arguments of \cite{Sethi} do not  
agree with the Hodge numbers of $ G_{bos} / J_{bos} $ for any  
$ (q_1,q_2) \in A $.    
We can use the recently developed techniques of  
\cite{GaravusoKatzarkovNoll} to properly compute the Hodge diamond of  
$ M^{(q_1,q_2)} $ for each $ (q_1,q_2) \in A $. 
 
According to (\ref{WCP[s]mirror}), given $ M^{(q_{1*},q_{2*})} $, for  
appropriately chosen fermionic weights 
\begin{equation*} 
(n_{\hat{\eta}_1}, n_{\hat{\gamma}_1}, n_{\hat{\eta}_2}, n_{\hat{\gamma}_2}) 
= 
(n_{\hat{\eta}_1*}, n_{\hat{\gamma}_1*}, n_{\hat{\eta}_2*}, n_{\hat{\gamma}_2*}) 
\, , 
\end{equation*} 
a Calabi-Yau supermanifold 
$  
\widetilde{M}^{ 
               (n_{\hat{\eta}_1}, n_{\hat{\gamma}_1}, 
                n_{\hat{\eta}_2}, n_{\hat{\gamma}_2}) 
              }_{(q_{1*},q_{2*})} 
$ 
obtained as a crepant resolution of 
\begin{equation*} 
     \frac{ 
            \left\{ 
                    \widetilde{W}^{(q_{1*},q_{2*})} = 0 
            \right\} 
          } 
          { \Gamma^{(q_{1*},q_{2*})} / \widetilde{J}^{(q_{1*},q_{2*})} } 
\subset 
\mathbf{WCP}^{8|4}_{ 
                     \left( 
                           \left.    
                                  1,1,1,2,2,2,3, 
                                  n_{x_1}^{(q_{1*},q_{2*})}, 
                                  n_{x_2}^{(q_{1*},q_{2*})}  
                           \right| 
                           n_{ \hat{\eta}_1 }, n_{ \hat{\gamma}_1 }, 
                           n_{ \hat{\eta}_2 }, n_{ \hat{\gamma}_2 }   
                     \right) 
                   } 
\end{equation*} 
is mirror to $ M^{(q_{1*},q_{2*})} $. 
From (\ref{Wtilde}), we have that 
\begin{equation*} 
\widetilde{W}^{(q_{1*},q_{2*})} 
   = \sum_{i=1}^7  \prod_{j=1}^7 y_j^{ M_{ji} } 
    + e^{t/6} \prod_{j=1}^7 y_j \prod_{b=1}^2 x_b^{-1} 
    + \sum_{a=1}^2 
       \left( 
             1 + x_a^{-2} \hat{\eta}_a \hat{\gamma}_a 
      \right) 
      \prod_{b=1}^2 x_b^{ N_{ba}^{(q_{1*},q_{2*})} } \, , 
\end{equation*} 
where the matrix elements of the invertible matrices  
$ (M_{ji}) $ and $ (N_{ba})^{(q_{1*},q_{2*})} $ of nonnegative integers are  
respectively given by 
\begin{equation*} 
(M_{ji}) = \mathrm{diag}(6,6,6,3,3,3,2)_{ji} \, , 
\qquad 
6 = \sum_{a=1}^2 N_{ba}^{(q_{1*},q_{2*})} q_{a*}  
\end{equation*} 
as required by (\ref{MandNequations}). 
Let $ B^{(q_1,q_2)} $ be the set of values of  
$  
(n_{\hat{\eta}_1}, n_{\hat{\gamma}_1},  
 n_{\hat{\eta}_2}, n_{\hat{\gamma}_2})  
$ 
for which $ \widetilde{W}^{(q_1,q_2)} $ is quasihomogeneous of degree 6. 
The solutions for  
$ 
(n_{\hat{\eta}_1*}, n_{\hat{\gamma}_1*}, 
 n_{\hat{\eta}_2*}, n_{\hat{\gamma}_2*})   
$ 
are those values of 
$ 
(n_{\hat{\eta}_1}, n_{\hat{\gamma}_1}, 
 n_{\hat{\eta}_2}, n_{\hat{\gamma}_2}) \in B^{(q_{1*},q_{2*})}  
$ 
for which the Hodge diamond of  
$ 
\widetilde{M}^{ 
               (n_{\hat{\eta}_1}, n_{\hat{\gamma}_1}, 
                n_{\hat{\eta}_2}, n_{\hat{\gamma}_2}) 
              }_{(q_{1*},q_{2*})} 
$ 
is what is expected for a mirror of $ M^{(q_{1*},q_{2*})} $, i.e. 
\begin{equation*} 
\begin{matrix} 
& & & 1 & & & \\ 
& & 0 & & 0 & & \\ 
& 0 & & 84 & & 0 & \\ 
1 & & 0 & & 0 & & 1 \\ 
& 0 & & 84 & & 0 & \\ 
& & 0 & & 0 & & \\ 
& & & 1 & & & \\ 
\end{matrix} \, . 
\end{equation*} 
This is also what is expected for the Hodge diamond of the super  
Landau-Ginzburg orbifold 
$ \widetilde{W}^{(q_{1*},q_{2*})} / \Gamma^{(q_{1*},q_{2*})} $. 
We can use the recently developed techniques of  
\cite{GaravusoKatzarkovNoll} to compute the Hodge diamond of  
$ 
\widetilde{M}^{ 
               (n_{\hat{\eta}_1}, n_{\hat{\gamma}_1}, 
                n_{\hat{\eta}_2}, n_{\hat{\gamma}_2}) 
              }_{(q_{1*},q_{2*})} 
$ 
for each  
$ 
(n_{\hat{\eta}_1}, n_{\hat{\gamma}_1},  
 n_{\hat{\eta}_2}, n_{\hat{\gamma}_2}) \in B^{(q_{1*},q_{2*})}   
$.

\end{example} 
 
\begin{example} 
\label{ExHyper2} 
 
Consider the Landau-Ginzburg orbifold $ G_{bos} / J_{bos} $, where 
\begin{equation*} 
G_{bos} =   \phi_1^6 + \phi_2^6 + \phi_3^3 + \phi_4^3 + \phi_5^3 
          + \phi_6^3 + \phi_7^3 
\end{equation*} 
and $ J_{bos} $ is the diagonal subgroup of the phase symmetries of 
$ G_{bos} $, i.e. 
\begin{equation*} 
J_{bos}: \ \phi_i \rightarrow e^{2 \pi i Q_i / 6}  \phi_i \, . 
\end{equation*} 
Using the techniques of \cite{LGOHodge}, we find that the Hodge diamond of 
$ G_{bos} / J_{bos} $ is 
\begin{equation*} 
\begin{matrix} 
& & & 1 & & & \\ 
& & 0 & & 0 & & \\ 
& 0 & & 1 & & 0 & \\ 
1 & & 73 & & 73 & & 1 \\ 
& 0 & & 1 & & 0 & \\ 
& & 0 & & 0 & & \\ 
& & & 1 & & & \\ 
\end{matrix} \, . 
\end{equation*} 
The proposal of \cite{Sethi} predicts that, for appropriately chosen fermionic 
weights $ (q_1,q_2) = (q_{1*},q_{2*}) $, $ G_{bos} / J_{bos} $ corresponds to  
a Calabi-Yau supermanifold $ M^{(q_1,q_2)} $ obtained as a crepant resolution of 
\begin{equation*} 
\{ G^{(q_1,q_2)} = 0 \} \in 
\mathbf{WCP}^{6|2}_{ (1,1,2,2,2,2,2 | q_1, q_2 ) [6] } 
\end{equation*} 
where 
\begin{equation*} 
G^{(q_1,q_2)} = G_{bos} + \xi_1 \xi_2 \, . 
\end{equation*} 
The positive integer values of $ (q_1,q_2) $ consistent with the  
quasihomogeneity constraint 
\begin{equation*} 
q_1 + q_2 = 6 
\end{equation*} 
are (up to a relabelling of $ \xi_1 $ and $ \xi_2 $) 
\begin{equation*} 
(q_1,q_2) \in A = \{(1,5),(2,4),(3,3) \}. 
\end{equation*} 
The solutions for $ (q_{1*},q_{2*}) $ are those values of $ (q_1,q_2) \in A $ 
for which the Hodge diamond of $ M^{(q_1,q_2)} $ agrees with the Hodge diamond 
of $ G_{bos} / J_{bos} $. 
It was found in \cite{Sethi,GaravusoKreuzerNoll:Fano} that the Hodge numbers  
obtained for $ M^{(q_1,q_2)} $ by using the heuristic arguments of  
\cite{Sethi} agree with the Hodge numbers of $ G_{bos} / J_{bos} $ when 
$ (q_1,q_2) = (2,4) $ but disagree when $ (q_1,q_2) \in \{(2,4),(3,3) \} $.  
We can use the recently developed techniques of 
\cite{GaravusoKatzarkovNoll} to properly compute the Hodge diamond of 
$ M^{(q_1,q_2)} $ for each $ (q_1,q_2) \in A $. 
 
According to (\ref{WCP[s]mirror}), given $ M^{(q_{1*},q_{2*})} $, for 
appropriately chosen fermionic weights 
\begin{equation*} 
(n_{\hat{\eta}_1}, n_{\hat{\gamma}_1}, n_{\hat{\eta}_2}, n_{\hat{\gamma}_2}) 
= 
(n_{\hat{\eta}_1*}, n_{\hat{\gamma}_1*}, n_{\hat{\eta}_2*}, n_{\hat{\gamma}_2*}) 
\, , 
\end{equation*} 
a Calabi-Yau supermanifold 
$  
\widetilde{M}^{ 
               (n_{\hat{\eta}_1}, n_{\hat{\gamma}_1}, 
                n_{\hat{\eta}_2}, n_{\hat{\gamma}_2}) 
              }_{(q_{1*},q_{2*})} 
$ 
obtained as a crepant resolution of 
\begin{equation*} 
\frac{ 
       \left\{ 
               \widetilde{W}^{(q_{1*},q_{2*})} = 0 
       \right\} 
     } 
     { \Gamma / \widetilde{J}^{(q_{1*},q_{2*})} } 
\subset 
\mathbf{WCP}^{8|4}_{ 
                     \left(  
                            \left. 
                                   1,1,2,2,2,2,2, 
                                   n_{x_1}^{(q_{1*},q_{2*})}, 
                                   n_{x_2}^{(q_{1*},q_{2*})}  
                            \right| 
                            n_{ \hat{\eta}_1 }, n_{ \hat{\gamma}_1 },  
                            n_{ \hat{\eta}_2 }, n_{ \hat{\gamma}_2 }  
                     \right) 
                   } 
\end{equation*} 
is mirror to $ M^{(q_{1*},q_{2*})} $. 
From (\ref{Wtilde}), we have that 
\begin{equation*} 
\widetilde{W}^{(q_{1*},q_{2*})} 
   = \sum_{i=1}^7  \prod_{j=1}^7 y_j^{ M_{ji} } 
    + e^{t/6} \prod_{j=1}^7 y_j \prod_{b=1}^2 x_b^{-1} 
    + \sum_{a=1}^2 
       \left( 
             1 + x_a^{-2} \hat{\eta}_a \hat{\gamma}_a 
      \right) 
      \prod_{b=1}^2 x_b^{ N_{ba}^{(q_{1*},q_{2*})} } \, , 
\end{equation*} 
where the matrix elements of the invertible matrices 
$ (M_{ji}) $ and $ (N_{ba})^{(q_{1*},q_{2*})} $ of nonnegative integers are 
respectively given by 
\begin{equation*} 
(M_{ji}) = \mathrm{diag}(6,6,3,3,3,3,3)_{ji} \, , 
\qquad 
6 = \sum_{a=1}^2 N_{ba}^{(q_{1*},q_{2*})} q_{a*} 
\end{equation*} 
as required by (\ref{MandNequations}). 
Let $ B^{(q_1,q_2)} $ be the set of values of 
$ 
(n_{\hat{\eta}_1}, n_{\hat{\gamma}_1}, 
 n_{\hat{\eta}_2}, n_{\hat{\gamma}_2}) 
$ 
for which $ \widetilde{W}^{(q_1,q_2)} $ is quasihomogeneous of degree 6. 
The solutions for 
$ 
(n_{\hat{\eta}_1*}, n_{\hat{\gamma}_1*}, 
 n_{\hat{\eta}_2*}, n_{\hat{\gamma}_2*}) 
$ 
are those values of 
$ 
(n_{\hat{\eta}_1}, n_{\hat{\gamma}_1}, 
 n_{\hat{\eta}_2}, n_{\hat{\gamma}_2}) \in B^{(q_{1*},q_{2*})} 
$ 
for which the Hodge diamond of 
$ 
\widetilde{M}^{ 
               (n_{\hat{\eta}_1}, n_{\hat{\gamma}_1}, 
                n_{\hat{\eta}_2}, n_{\hat{\gamma}_2}) 
              }_{(q_{1*},q_{2*})} 
$ 
is what is expected for a mirror of $ M^{(q_{1*},q_{2*})} $, i.e. 
\begin{equation*} 
\begin{matrix} 
& & & 1 & & & \\ 
& & 0 & & 0 & & \\ 
& 0 & & 73 & & 0 & \\ 
1 & & 1 & & 1 & & 1 \\ 
& 0 & & 73 & & 0 & \\ 
& & 0 & & 0 & & \\ 
& & & 1 & & & \\ 
\end{matrix} \, . 
\end{equation*} 
This is also what is expected for the Hodge diamond of the super  
Landau-Ginzburg orbifold 
$ \widetilde{W}^{(q_{1*},q_{2*})} / \Gamma^{(q_{1*},q_{2*})} $. 
We can use the recently developed techniques of 
\cite{GaravusoKatzarkovNoll} to compute the Hodge diamond of 
$ 
\widetilde{M}^{ 
               (n_{\hat{\eta}_1}, n_{\hat{\gamma}_1}, 
                n_{\hat{\eta}_2}, n_{\hat{\gamma}_2}) 
              }_{(q_{1*},q_{2*})} 
$ 
for each 
$ 
(n_{\hat{\eta}_1}, n_{\hat{\gamma}_1}, 
 n_{\hat{\eta}_2}, n_{\hat{\gamma}_2}) \in B^{(q_{1*},q_{2*})} 
$. 
 
\end{example}

\section{\label{WCP[s_1,...,s_l]}Complete intersection in  
\texorpdfstring{$ \mathbf{WCP}^{m-1|n} $}{WCPm-1|n}} 
 
Consider a member of the family of compact supervarieties  
$  
  \mathbf{WCP}^{m-1|n} 
  _{ ( Q_1,\ldots,Q_m | q_1,\ldots,q_n ) [s_1,\ldots,s_l] } 
$,  
i.e. a complete intersection of hypersurfaces, defined by the common zero locus 
$ \cap_{\beta=1}^l {G_{\beta} = 0} $ of quasihomogeneous polynomials  
$ G_{\beta} = G_{\beta}(\phi,\xi) $ of degree $ s_{\beta} $, in the weighted  
complex superprojective space of Section \ref{WCP}. 
A K\"{a}hler supermanifold may be obtained as a crepant resolution of this  
supervariety if a crepant resolution exists. 
Such a supermanifold would have nonnegative super-first Chern class when 
\begin{equation} 
\label{c_1(WCP[s_1,...,s_l]>=0)} 
\sum_{i=1}^m Q_i - \sum_{a=1}^n q_a -\sum_{\beta = 1}^l s_{\beta} \ge 0 \, . 
\end{equation} 
We will assume that a crepant resolution exists and the condition  
(\ref{c_1(WCP[s_1,...,s_l]>=0)}) is satisfied. 
To describe our K\"{a}hler supermanifold in terms of a gauged linear sigma 
model, we make the replacements 
\begin{equation*}
\overline{P} e^{-2 s V} P 
   \rightarrow \sum_{\beta=1}^l \overline{P}_{\beta} e^{-2 s_{\beta} V} P_{\beta} \, ,
\qquad
P \cdot G(\Phi,\Xi) 
   \rightarrow \sum_{\beta=1}^l P_{\beta} \cdot G_{\beta} \left( \Phi, \Xi \right)    
\end{equation*}
in (\ref{L_WCP[s]}), where the chiral superfield $ P_{\beta} $ has $ U(1) $ charge 
$ - s_{\beta} $  and $ G_{\beta}(\Phi,\Xi) $ is a quasihomogeneous polynomial of $ U(1) $ charge  
$ s_{\beta} $. 
The resulting Lagrangian is 
\begin{align} 
\label{L_WCP[s_1,...,s_l]} 
L &_{ \mathbf{WCP}^{m-1|n} 
      _{ ( Q_1,\ldots,Q_m | q_1,\ldots,q_n ) [s_1,\ldots,s_l] } 
    } 
\nonumber 
\\ 
  &\phantom{\quad}
   = \int d^4 \theta 
     \left( 
             \sum_{i=1}^m \overline{\Phi}_i e^{2 Q_i V} \Phi_i 
           + \sum_{\beta=1}^l \overline{P}_{\beta} e^{-2 s_{\beta} V} P_{\beta}
           + \sum_{a=1}^n \overline{\Xi}_a e^{2 q_a V} \Xi_a 
           - \frac{1}{2 e^2} \overline{\Sigma} \Sigma 
     \right)
\nonumber
\\
  &\phantom{= \quad}      
   - \frac{1}{2} 
     \left( 
            t \int d^2 \tilde{\theta} \,\Sigma + c.c. 
     \right)  
   + \left( 
            \int d^2 \theta \, 
            \sum_{\beta=1}^l 
            P_{\beta} \cdot  G_{\beta} \left( 
                                              \Phi, \Xi 
                                       \right) 
          + c.c. 
     \right). 
\end{align} 
The nonlinear sigma model phase is realized in the low energy limit with 
$ r \gg 0 \, , $ $ \sigma = 0 \, , $ $ p_{\beta} = 0 \, , $ and target space  
\begin{align} 
\overset{l}{\underset{\beta=1}{\bigcap}} 
\left\{ 
        G_{\beta} = 0 
\right\} 
   \subset 
     \frac{ 
            \left\{ ( \phi_1,\ldots,\phi_m | \xi_1,\ldots,\xi_n ) 
               \left| 
                  \textstyle{ 
                              \sum_{i=1}^m Q_i \left| \phi_i \right |^2 
                             + \sum_{a=1}^n q_a \left| \xi_a \right |^2 
                            } 
                             = r 
                    \right. 
            \right\} 
          }{U(1)} \, , 
\end{align} 
where $ p_{\beta} $ is the lowest component of $ P_{\beta} $. 
If the super Calabi-Yau condition 
\begin{equation} 
\label{SCYCI-Condition} 
\sum_{i=1}^m Q_i - \sum_{a=1}^n q_a = \sum_{\beta=1}^l s_{\beta} 
\end{equation} 
is satisfied, then the Fayet-Iliopoulos parameter $ r $ does not renormalize. 
 
Extending the result (\ref{Pi_WCP[s]}) to the present case, we obtain for the 
super Landau-Ginzburg mirror period 
\begin{align} 
\label{Pi_WCP[s_1,...,s_l]} 
\Pi &_{  
        \widetilde{\mathbf{WCP}}^{m-1|n} 
        _{ ( Q_1,\ldots,Q_m | q_1,\ldots,q_n ) [s_1,\ldots s_l] } 
      } 
\nonumber 
\\
   &\phantom{_\mathbf{WCP}} 
    = \int 
      \left( 
            \prod_{i=1}^m dY_i 
      \right) 
      \left( 
             \prod_{\beta=1}^l dY_{P_{\beta}} \, e^{ -Y_{P_{\beta}} } 
      \right) 
      \left( 
            \prod_{a=1}^n dX_a \, d\eta_a \, d \gamma_a 
      \right) 
\nonumber 
\\ 
&\phantom{_\mathbf{WCP}= \int} \times 
\delta \left( 
                \sum_{i=1}^m Q_i Y_i  
              - \sum_{\beta=1}^l s_{\beta} Y_{P_{\beta}} 
              - \sum_{a=1}^n q_a X_a - t 
       \right) 
\nonumber 
\\ 
& \phantom{_\mathbf{WCP}= \int} \times 
  \exp{ \left[ 
              - \sum_{i=1}^m  e^{-Y_i}  
              - \sum_{\beta=1}^l e^{ -Y_{P_{\beta}} } 
              - \sum_{a=1}^n e^{-X_a}(1 + \eta_a \gamma_a) 
        \right] 
      }, 
\end{align} 
where $ Y_{P_\beta} $ is a bosonic twisted chiral superfield (with periodicity 
$ 2 \pi i $) which satisfies 
\begin{equation} 
\mathrm{Re} \, Y_{P_{\beta}}  
  = \overline{P}_{\beta} e^{-2 s_{\beta} V} P_{\beta} \, . 
\end{equation}

\section{\label{T[s_A1,...,s_Al]}Complete intersection in a general toric  
supervariety} 
 
Consider a member of the family of compact supervarieties 
$ \mathcal{T}^{m-k|n}_{(Q_{Ai}|q_{Aa})[s_{A \beta}]} $, i.e. a complete  
intersection of hypersurfaces, defined by the common zero locus 
$ \cap_{\beta=1}^l \{ G_{\beta} = 0 \} $ of quasihomogeneous polynomials  
$ G_{\beta} = G_{\beta}(\phi,\xi) $ of multidegree $ s_{A \beta} $, where  
$ A=1,\ldots,k $, in the toric supervariety   
\begin{equation*} 
\mathcal{T}^{m-k|n}_{(Q_{Ai}|q_{Aa})} 
   = \frac{ \mathbf{C}^{m|n} \setminus S }{ ( \mathbf{C}^{*} )^k } \, . 
\end{equation*} 
Here, the matrix elements of the matrices 
\begin{equation*} 
(Q_{Ai} | q_{Aa} ) =  
\left( 
\begin{array}{ccc|ccc} 
Q_{11} & \cdots & Q_{1m} & q_{11} & \cdots & q_{1n} 
\\ 
\vdots & & \vdots & \vdots & & \vdots 
\\ 
Q_{k1} & \cdots & Q_{km} & q_{k1} & \cdots & q_{kn} 
\end{array} 
\right), 
\qquad 
[s_{A \beta}] = 
\begin{pmatrix} 
s_{11} & \cdots & s_{1l} 
\\ 
\vdots & & \vdots 
\\ 
s_{k1} & \cdots & s_{kl} 
\end{pmatrix} 
\end{equation*} 
are positive integers, $ S $ is an exceptional supervariety in  
$ \mathbf{C}^{m|n} $, and  
\begin{equation*} 
(\mathbf{C}^*)^k: \  
\phi_i \rightarrow \lambda^{ Q_{A i} } \phi_i \, , 
\quad 
\xi_a \rightarrow \lambda^{ q_{A a} } \xi_a \, , 
\qquad 
\begin{array}{ll} 
i = 1,\ldots,m \, ; \quad 
&  
a = 1,\ldots,n \, ; 
\\ 
A = 1,\ldots,k \, ; \quad 
& 
\lambda \in \mathbf{C}^* \, . 
\end{array} 
\end{equation*} 
A K\"ahler supermanifold may be obtained as a crepant resolution of this  
supervariety if a crepant resolution exists. 
Such a supermanifold would have nonnegative super-first Chern class when 
\begin{equation} 
\label{c_1([s_A1,...,s_Al])>=0} 
\sum_{i=1}^m Q_{Ai} - \sum_{a=1}^n q_{Aa} 
   \ge \sum_{\beta=1}^l s_{A \beta} \, , 
\qquad 
A = 1,\ldots,k \, .  
\end{equation} 
We will assume that a crepant resolution exists and the condition  
(\ref{c_1([s_A1,...,s_Al])>=0}) is satisfied. 
To describe our K\"ahler supermanifold in terms of of a gauged linear sigma  
model, we replace the $ U(1) $ gauge group of  
(\ref{L_WCP[s_1,...,s_l]}) with $ U(1)^k $.  
The resulting Lagrangian is  
\begin{align} 
L &_{  \mathcal{T}^{m-k|n}_{ ( Q_{Ai}|q_{Aa} ) [s_{A \beta}] }  } 
\nonumber 
\\ 
  &\phantom{\qquad}
   = \int d^4 \theta 
     \left( 
             \sum_{i=1}^m \overline{\Phi}_i  
             e^{ 2 \sum_{A=1}^k Q_{Ai} V_{A} } \Phi_i
           + \sum_{\beta=1}^l \overline{P}_{\beta} 
             e^{-2 \sum_{A=1}^k s_{A \beta} V_A} P_{\beta}
     \right. 
\nonumber
\\
  &\phantom{\qquad = \int d^4 \theta}
     \left.                
           + \sum_{a=1}^n \overline{\Xi}_a  
             e^{ 2 \sum_{A=1}^k q_{Aa} V_{A} } \Xi_a                     
           - \sum_{A,B = 1}^k 
             \frac{1}{ 2 e_{AB}^2 }  
             \overline{\Sigma}_A \Sigma_B 
    \right) 
\nonumber 
\\ 
  &\phantom{\qquad=} 
   - \frac{1}{2} 
     \left( 
            \int d^2 \tilde{\theta} \,  
            \sum_{A=1}^k t_A \Sigma_A  
           + c.c. 
     \right) 
   + \left( 
            \int d^2 \theta \, 
            \sum_{\beta=1}^l 
            P_{\beta} \cdot G_{\beta} \left( 
                                             \Phi, \Xi 
                                      \right) 
          + c.c. 
     \right). 
\end{align} 
Under the $ A $-th $ U(1) $, the bosonic chiral superfields 
$ \Phi_i $ have charges $ Q_{Ai} $, where $ i=1,\ldots,m $, 
the fermionic chiral superfields 
$ \Xi_a $ have charges $ q_{Aa} $, where $ a=1,\ldots,n $, 
the bosonic chiral superfields $ P_{\beta} $ 
have charges $ -s_{A \beta} $, and the quasihomogeneous  
polynomials $ G_{\beta} = G_{\beta} ( \Phi, \Xi ) $ have charges  
$ s_{A \beta} $, where $ \beta = 1,\ldots,l $. 
The nonlinear sigma model phase is realized in the low energy limit with 
$ r_A \gg 0 $, $ \sigma_A = 0 $,  $ p_{\beta} = 0 $, and target space 
\begin{equation} 
     \overset{l}{\underset{\beta=1}{\bigcap}} 
     \left\{ 
            G_{\beta} = 0 
     \right\} 
   \subset 
     \frac{ 
            \left\{ ( \phi_1,\ldots,\phi_m | \xi_1,\ldots,\xi_n ) 
               \left| 
                  \textstyle{ 
                              \sum_{i=1}^m Q_{Ai}  
                              \left| \phi_i \right |^2 
                            + \sum_{a=1}^n q_{Aa}  
                              \left| \xi_a \right |^2 
                            } 
                             = r_A 
                    \right. 
            \right\} 
          } 
{U(1)^k} \, , 
\end{equation} 
where $ \sigma_{A} $ is the lowest component of $ \Sigma_{A} $. 
If the super Calabi-Yau condition 
\begin{equation} 
\label{SCYCIT-Condition} 
\sum_{i=1}^m Q_{Ai} - \sum_{a=1}^n q_{Aa} 
   = \sum_{\beta=1}^l s_{A \beta} \, , 
\qquad 
A = 1,\ldots,k 
\end{equation} 
is satisfied, then the $ r_A $ do not renormalize. 
 
Extending the result (\ref{Pi_WCP[s_1,...,s_l]}) to the present case, we  
obtain the for super Landau-Ginzburg mirror period 
\begin{align} 
\label{Pi_T[s_A1,...,s_Al]} 
\Pi_{  \widetilde{\mathcal{T}}^{m-k|n}_{ ( Q_{Ai}|q_{Aa} ) [s_{A \beta}] }  } 
   &= \int 
      \left( 
            \prod_{i=1}^m dY_i 
      \right) 
      \left( 
             \prod_{\beta=1}^l dY_{P_{\beta}} \, e^{ -Y_{P_{\beta}} } 
      \right) 
      \left( 
            \prod_{a=1}^n dX_a \, d\eta_a \, d \gamma_a 
      \right) 
\nonumber 
\\ 
&\phantom{= \int} \times 
\prod_{A=1}^k 
\delta \left( 
                \sum_{i=1}^m Q_{Ai} Y_i  
              - \sum_{\beta=1}^l s_{A \beta} Y_{P_{\beta}} 
              - \sum_{a=1}^n q_{Aa} X_a - t_A 
       \right) 
\nonumber 
\\ 
& \phantom{= \int}\times 
  \exp{ \left[ 
              - \sum_{i=1}^m  e^{-Y_i}  
              - \sum_{\beta=1}^l e^{ -Y_{P_{\beta}} } 
              - \sum_{a=1}^n e^{-X_a}(1 + \eta_a \gamma_a) 
        \right] 
      }, 
\end{align} 
where 
\begin{gather} 
\mathrm{Re} \, Y_i  
  = \overline{\Phi}_i  
    \exp{  
          \left(  
                 2 \sum_{A=1}^k Q_{Ai} V_A 
          \right)  
         }  
    \Phi_i \, , 
\qquad 
\mathrm{Re} \, Y_{P_{\beta}}  
  = \overline{P}_{\beta}  
    \exp{ \left(  
                 -2 \sum_{A=1}^k s_{A \beta} V_A 
          \right) 
        }  
    P_{\beta}  
\, , 
\nonumber 
\\ 
\mathrm{Re} \, X_a  
  = - \overline{\Xi}_a  
      \exp{ \left(  
                   2 \sum_{A=1}^k q_{Aa} V_A 
            \right)   
          }  
      \Xi_a \, . 
\end{gather} 
 
Consider making the change of variables 
\begin{equation} 
e^{-Y_{P_{\beta}}} = \widetilde{P}_{\beta} \, , 
\qquad 
e^{-Y_i} = U_i \prod_{\beta=1}^l \widetilde{P}_\beta^{M_{\beta i}} \, , 
\qquad 
e^{-X_a} = V_a \prod_{\beta=1}^l \widetilde{P}_\beta^{N_{\beta a}} \, , 
\end{equation} 
in (\ref{Pi_T[s_A1,...,s_Al]}), where the matrices  
$ (M_{\beta i}) $ and $ (N_{\beta a}) $ satisfy 
\begin{equation} 
\label{s_beta} 
s_{A \beta}  
   = \sum_{i=1}^m M_{\beta i} Q_{Ai} - \sum_{a=1}^n N_{\beta a} q_{Aa} \, , 
\qquad 
A = 1,\ldots,k \, ; 
\quad 
\beta = 1,\ldots,l 
\end{equation} 
and, for fixed $ i = \hat{\imath} $, $ a = \bar{a} $,   
have \emph{at most} one nonzero matrix element  
$ M_{ \hat{\beta} \hat{\imath} } = 1 $,  
$ N_{ \bar{\beta} \bar{a} } = 1 $, respectively. 
Then, in terms of the new variables, we obtain  
\begin{align*} 
\Pi &_{    
        \widetilde{\mathcal{T}}^{m-k|n}_{ ( Q_{Ai} | q_{Aa} ) [s_{A \beta}] } 
      } 
\nonumber 
\\ 
   &= (-1)^{m+n+l} \int 
      \left( 
             \prod_{i=1}^m \frac{dU_i}{U_i} 
      \right) 
      \left( 
             \prod_{\beta=1}^l d\widetilde{P}_{\beta} 
      \right) 
      \left( 
             \prod_{a=1}^n \frac{dV_a}{V_a} \,  
             d\eta_a \, d \gamma_a 
      \right) 
\nonumber 
\\ 
&\phantom{=(-1)^{m+n+l} \int} 
\times \prod_{A=1}^k 
\delta \left( 
              \ln{ \left( 
                          \frac{ \prod_{i=1}^m U^{Q_{Ai}}_i } 
                               { \prod_{a=1}^n V^{q_{Aa}}_a } 
                   \right)       
                 }  
            + t_A 
       \right) 
\nonumber 
\\ 
&\phantom{=(-1)^{m+n+l} \int} \times 
  \exp{ \left[ 
              - \sum_{\beta=1}^l \widetilde{P}_{\beta} 
                \left( 
                       \sum_{ M_{\beta i} = 1 } U_i 
                     + \sum_{ N_{\beta a} = 1 } V_a  
                       \left( 
                              1 + \eta_a \gamma_a 
                       \right)  
                     + 1 
                \right) 
         \right. 
      } 
\nonumber 
\\ 
&\phantom{=(-1)^{m+n+l} \int} 
 \phantom{ 
           \times  
           \exp{  
                 \left[  
                        \vphantom{ \sum_{\beta=1}^l  
                                   \widetilde{P}_{\beta} 
                                 }  
                  \right.  
               }  
         } 
        \left. 
              - \sum_{ M_{\beta i} = 0 \ \forall \beta } U_i 
              - \sum_{ N_{\beta a} = 0 \ \forall \beta } V_a 
                \left( 
                       1 + \eta_a \gamma_a 
                \right)        
        \right]. 
\end{align*} 
Performing the integration over the $ \widetilde{P}_{\beta} $ yields 
\begin{align} 
\label{MirrorPeriodSuperCI} 
\Pi &_{   
        \widetilde{\mathcal{T}}^{m-k|n}_{ ( Q_{Ai} | q_{Aa} ) [s_{A \beta}] } 
      } 
\nonumber 
\\ 
  &= (-1)^{m+n+l} \int 
      \left( 
             \prod_{i=1}^m \frac{dU_i}{U_i} 
      \right) 
      \left( 
             \prod_{a=1}^n \frac{dV_a}{V_a} \, 
             d\eta_a \, d \gamma_a 
      \right) 
\prod_{A=1}^k 
\delta \left( 
              \ln{ \left( 
                          \frac{ \prod_{i=1}^m U^{Q_{Ai}}_i } 
                               { \prod_{a=1}^n V^{q_{Aa}}_a } 
                   \right) 
                 } 
            + t_A 
       \right) 
\nonumber 
\\ 
&\phantom{=(-1)^{m+n+l} \int} 
  \times \prod_{\beta=1}^l  
  \delta \left( 
                \sum_{ M_{\beta i} = 1 } U_i 
              + \sum_{ N_{\beta a} = 1 } V_a 
                \left( 
                       1 + \eta_a \gamma_a 
                \right) 
              + 1 
         \right) 
\nonumber 
\\ 
&\phantom{=(-1)^{m+n+l} \int} \times 
         \exp{ 
               \left( 
                    - \sum_{ M_{\beta i} = 0 \ \forall \beta } U_i 
                    - \sum_{ N_{\beta a} = 0 \ \forall \beta } V_a 
                      \left( 
                             1 + \eta_a \gamma_a 
                      \right)    
               \right) 
             }. 
\end{align} 
Thus, we have obtained an $ (m-k-l-n) $-dimensional noncompact supermanifold  
$ \widetilde{M}^{\circ} \subset (\mathbf{C}^*)^{m + n|2n} $ defined by 
\begin{equation} 
\prod_{i=1}^m U^{Q_{Ai}}_i  \prod_{a=1}^n V^{-q_{Aa}}_a = e^{-t_A} 
\, , 
\qquad 
  \sum_{ M_{\beta i} = 1 } U_i 
+ \sum_{ N_{\beta a} = 1 } V_a 
  \left( 
         1 + \eta_a \gamma_a 
  \right) 
+ 1 
= 0 \, .  
\end{equation} 
The period (\ref{MirrorPeriodSuperCI}) is identical to the period of a  
super Landau-Ginzburg model on $ \widetilde{M}^{\circ} $ with superpotential 
\begin{equation} 
\widetilde{W}_{\widetilde{M}^{\circ}}  
  =   \sum_{ M_{\beta i} = 0 \ \forall \beta } U_i 
    + \sum_{ N_{\beta a} = 0 \ \forall \beta } V_a 
      \left( 
             1 + \eta_a \gamma_a 
      \right). 
\end{equation} 
 
\begin{example} 

Consider a gauged linear sigma model corresponding to the complete  
intersection defined by two quadrics in $ \mathbf{CP}^{3|2} $. 
We must find matrices $ (M_{\beta i}) $ and $ (N_{\beta a}) $ which satisfy 
(\ref{s_beta}) and, for fixed $ i = \hat{\imath} $, $ a = \bar{a} $, 
have at most one nonzero matrix element 
$ M_{ \hat{\beta} \hat{\imath} } = 1 $, 
$ N_{ \bar{\beta} \bar{a} } = 1 $, respectively. 
We can choose 
\begin{equation*} 
(M_{\beta i})  
  =  \begin{pmatrix} 
        1 & 1 & 0 & 0 \\ 
        0 & 0 & 1 & 1 
     \end{pmatrix}, 
\qquad 
(N_{\beta a}) 
  = \begin{pmatrix} 
         0 & 0 \\ 
         0 & 0  
     \end{pmatrix}. 
\end{equation*} 
The mirror theory is thus a super Landau-Ginzburg model on the supermanifold  
$ \widetilde{M}^{\circ} \subset (\mathbf{C}^*)^{(6|4)}$  
defined by 
\begin{equation*} 
U_1 U_2 U_3 U_4 V^{-1}_1 V^{-1}_2 = e^{-t} \, , 
\qquad 
U_1 + U_2 + 1 = 0 \, , 
\qquad 
U_3 + U_4 + 1 = 0  
\end{equation*} 
and with superpotential 
\begin{equation*} 
\widetilde{W}_{\widetilde{M}^{\circ}}  
  = V_1 \left(  
               1 + \eta_1 \gamma_1  
        \right)  
  + V_2 \left( 1 + \eta_2 \gamma_2  
        \right).  
\end{equation*}
 
\end{example}

\section{\label{PeriodRelations}Period relations} 
 
In this section, we will establish some relations between periods of mirrors  
of gauged linear sigma models corresponding to various geometries. 
These relations generalize results which were obtained in \cite{Schwarz:Sigma} 
by working with A-models. 
 
Starting from (\ref{Pi_WCP[s]}), we obtain 
\begin{align*} 
\Pi &{}_{  
          \widetilde{\mathbf{WCP}}^{m-1|n} 
          _{ ( Q_1,\ldots,Q_m | q_1,\ldots,q_n ) [s] } 
        } 
\nonumber 
\\[11pt] 
 &=   \int 
      \left( 
            \prod_{i=1}^m dY_i 
      \right) 
      dY_P \, e^{-Y_P}   
      \left( 
            \prod_{a=1}^n dX_a \, d\eta_a \, d \gamma_a 
      \right) 
\delta \left( 
                \sum_{i=1}^m Q_i Y_i - s Y_P 
              - \sum_{a=1}^n q_a X_a - t 
       \right)
\nonumber 
\\ 
& \phantom{=\int} 
  \times 
  \exp{ \left[ 
              - \sum_{i=1}^m  e^{-Y_i} - e^{-Y_P} 
              - \sum_{a=1}^n e^{-X_a}(1 + \eta_a \gamma_a) 
        \right] 
      }
\allowdisplaybreaks       
\nonumber 
\\[11pt] 
   &= (-1)^n \int 
      \left( 
            \prod_{i=1}^m dY_i 
      \right) 
      dY_P \, e^{-Y_P} 
      \left( 
            \prod_{a=1}^n dX_a \, e^{-X_a} 
      \right) 
\nonumber 
\delta \left( 
                \sum_{i=1}^m Q_i Y_i - s Y_P 
              - \sum_{a=1}^n q_a X_a - t 
       \right) 
\nonumber 
\\ 
&\phantom{=(-1)^n \int} \times 
  \exp{ \left( 
              - \sum_{i=1}^m  e^{-Y_i} -  e^{-Y_P} 
              - \sum_{a=1}^n e^{-X_a} 
        \right) 
      } 
\allowdisplaybreaks
\nonumber 
\\[11pt] 
&= 
\Pi_{  
      \widetilde{\mathbf{WCP}}^{m-1} 
      _{ ( Q_1,\ldots,Q_m ) [s,q_1,\ldots,q_n] } 
    } \, . 
\end{align*} 
Thus, 
\begin{equation} 
\label{Relation1} 
\Pi_{ 
      \widetilde{\mathbf{WCP}}^{m-1|n} 
      _{ ( Q_1,\ldots,Q_m | q_1,\ldots,q_n )[s] } 
    } 
= 
\Pi_{ 
      \widetilde{\mathbf{WCP}}^{m-1} 
      _{ ( Q_1,\ldots,Q_m ) [s,q_1,\ldots,q_n] } 
    } 
\, . 
\end{equation} 
Similarly, starting from (\ref{Pi_WCP[s_1,...,s_l]}), we obtain 
\begin{align*} 
\Pi &{}_{ 
          \widetilde{\mathbf{WCP}}^{m-1|n} 
          _{ ( Q_1,\ldots,Q_m | q_1,\ldots,q_n ) [s_1,\ldots,s_l] } 
         } 
\nonumber 
\\[11pt] 
   &= \int 
      \left( 
            \prod_{i=1}^m dY_i 
      \right) 
      \left( 
             \prod_{\beta=1}^l dY_{P_{\beta}} \, e^{ -Y_{P_{\beta}} } 
      \right) 
      \left( 
            \prod_{a=1}^n dX_a \, d\eta_a \, d \gamma_a 
      \right) 
\nonumber 
\\ 
&\phantom{=\int} \times 
\delta \left( 
                \sum_{i=1}^m Q_i Y_i 
              - \sum_{\beta=1}^l s_{\beta} Y_{P_{\beta}} 
              - \sum_{a=1}^n q_a X_a - t 
       \right) 
\nonumber 
\\ 
&\phantom{=\int} \times 
  \exp{ \left[ 
              - \sum_{i=1}^m  e^{-Y_i} 
              - \sum_{\beta=1}^l e^{ -Y_{P_{\beta}} } 
              - \sum_{a=1}^n e^{-X_a}(1 + \eta_a \gamma_a) 
        \right] 
      } 
\allowdisplaybreaks 
\nonumber 
\\[11pt] 
   &= (-1)^{l-1}\int 
      \left( 
            \prod_{i=1}^m dY_i 
      \right) 
      dY_{P_l} \, e^{-Y_{P_l}} 
      \left( 
             \prod_{\beta=1}^{l-1} dY_{P_{\beta}} \,  
             \eta_{P_{\beta}} \gamma_{P_{\beta}}  
      \right) 
      \left( 
            \prod_{a=1}^n dX_a \, d\eta_a \, d \gamma_a 
      \right) 
\nonumber 
\\ 
&\phantom{=(-1)^{l-1}\int} \times 
\delta \left( 
                \sum_{i=1}^m Q_i Y_i 
              - s_l Y_{P_l}  
              - \sum_{\beta=1}^{l-1} s_{\beta} Y_{P_{\beta}} 
              - \sum_{a=1}^n q_a X_a - t 
       \right) 
\nonumber 
\\ 
&\phantom{=(-1)^{l-1}\int} \times 
  \exp{ \left[ 
              - \sum_{i=1}^m  e^{-Y_i} 
              - e^{ -Y_{P_l} } 
              - \sum_{\beta=1}^{l-1} e^{ -Y_{P_{\beta}} } 
                (1 + \eta_{P_{\beta}} \gamma_{P_{\beta}}) 
              - \sum_{a=1}^n e^{-X_a}(1 + \eta_a \gamma_a) 
        \right] 
       }  
\allowdisplaybreaks 
\nonumber 
\\[11pt] 
&= 
   \Pi_{ 
         \widetilde{\mathbf{WCP}}^{m-1|n+l-1} 
         _{ ( Q_1,\ldots,Q_m | q_1,\ldots,q_n, s_1,\ldots,s_{l-1} ) [s_l] }  
       } 
\allowdisplaybreaks
\nonumber 
\\[11pt] 
   &= (-1)^l\int 
      \left( 
            \prod_{i=1}^m dY_i 
      \right) 
      \left( 
             \prod_{\beta=1}^l dY_{P_{\beta}} \, 
             \eta_{P_{\beta}} \gamma_{P_{\beta}} 
      \right) 
      \left( 
            \prod_{a=1}^n dX_a \, d\eta_a \, d \gamma_a 
      \right) 
\nonumber 
\\ 
&\phantom{=(-1)^l\int} \times 
\delta \left( 
                \sum_{i=1}^m Q_i Y_i 
              - \sum_{\beta=1}^l s_{\beta} Y_{P_{\beta}} 
              - \sum_{a=1}^n q_a X_a - t 
       \right) 
\nonumber 
\\ 
&\phantom{=(-1)^l\int} \times 
  \exp{ \left[ 
              - \sum_{i=1}^m  e^{-Y_i} 
              - \sum_{\beta=1}^l e^{ -Y_{P_{\beta}} } 
                (1 + \eta_{P_{\beta}} \gamma_{P_{\beta}}) 
              - \sum_{a=1}^n e^{-X_a}(1 + \eta_a \gamma_a) 
        \right] 
      } 
\nonumber 
\\[11pt] 
&= 
   \Pi_{ 
         \widetilde{\mathbf{WCP}}^{m-1|n+l} 
         _{ ( Q_1,\ldots,Q_m | q_1,\ldots,q_n, s_1,\ldots,s_l ) }  
       } \, . 
\end{align*} 
Thus, 
\begin{align} 
\Pi_{ 
      \widetilde{\mathbf{WCP}}^{m-1|n} 
      _{ ( Q_1,\ldots,Q_m | q_1,\ldots,q_n ) [s_1,\ldots,s_l] } 
    } 
&= 
\Pi_{ 
      \widetilde{\mathbf{WCP}}^{m-1|n+l-1} 
      _{ ( Q_1,\ldots,Q_m | q_1,\ldots,q_n, s_1,\ldots,s_{l-1} ) [s_l] } 
    } 
\nonumber 
\\[11pt] 
&= 
\label{Relation2} 
\Pi_{ 
      \widetilde{\mathbf{WCP}}^{m-1|n+l} 
      _{ ( Q_1,\ldots,Q_m | q_1,\ldots,q_n, s_1,\ldots,s_l ) } 
    } \, . 
\end{align} 
In a straightforward manner, one can generalize the relation 
(\ref{Relation1}) to obtain 
\begin{equation} 
\Pi_{  \widetilde{\mathcal{T}}^{m-k|n}_{ ( Q_{Ai}|q_{Aa} ) [s_A] }  } 
  = \Pi_{ \widetilde{\mathcal{T}}^{m-k|n}_{ ( Q_{Ai} ) [s_A,q_{Aa}] }  } 
\end{equation} 
and generalize (\ref{Relation2}) to obtain 
\begin{equation} 
\Pi_{  \widetilde{\mathcal{T}}^{m-k|n} 
       _{ ( Q_{Ai}|q_{Aa} ) [s_{A1},\ldots,s_{Al}] }  } 
  = \Pi_{  \widetilde{\mathcal{T}}^{m-k|n} 
    _{ ( Q_{Ai} | q_{Aa},s_{A1},\ldots,s_{A \, l-1}) [s_{Al}] }  } 
  = \Pi_{  \widetilde{\mathcal{T}}^{m-k|n} 
    _{ ( Q_{Ai} | q_{Aa},s_{A1},\ldots,s_{Al} ) }  } 
\, . 
\end{equation}

\section{\label{Categorical}Categorical interpretation}  
In this section, we give a categorical intepretation of the material presented in 
Sections \ref{WCP} - \ref{PeriodRelations}. 
We briefly review the phenomena of spectra and gaps \cite{BFK} and connect them with supercohomology calculations and algebraic cycles. 
Complete details will appear in \cite{GaravusoKatzarkovNoll}. 
 
\subsection{Supermanifold cohomology calculations} 
 
We begin with a simple example. 
 
\begin{example} 
\label{super3Dcubic} 
Consider a hypersurface 
 
\begin{equation*} 
M = \left\{ 
            \phi_1^3 + \phi_2^3 + \phi_3^3 + \phi_4^3 + \phi_5^3  
          + \xi_1 \xi_2  
          = 0 
    \right\} 
  \in \mathbf{WCP}^{4|2}_{(1,1,1,1,1|1,2)[3]} \, .  
\end{equation*} 
Treating $ M $ as a DG scheme, its structure sheaf becomes 
the complex shown below.  
 
\begin{equation*}  
\xymatrix{ 
\cO(-4)\ar[r]\ar[d] 
& 
\cO(-5)^{\oplus 2}\ar[r]\ar[d] 
& 
\cO(-1)\ar[d] 
\\ 
\cO(-2)\oplus\cO(-2)\ar[r]\ar[d] 
& 
\cO(-3)^{\oplus 4}\ar[r]\ar[d]\ar[dl] 
& 
\cO(-4)^{\oplus 2}\ar[d]\ar[dl] 
\\ 
\cO\ar[r]^0 
& 
\cO(-1)^{\oplus 2}\ar[r]^0 
& 
\cO(-2) 
} 
\end{equation*} 
Note that 
\begin{equation*} 
h^1(\cO) = h^1 \left( K_{\mathbf{CP}^5} - 2H - 2H - 2H \right) 
         = 1 \, . 
\end{equation*} 
Here $ H $ is the  hyperplane section in $ \mathbf{CP}^5 $. 
Applying the Riemman-Roch theorem, we obtain 
\begin{equation*} 
h^{2,1}(M) = 5 \, , 
\qquad 
h^{1,1}(M) = 1 \, . 
\end{equation*} 
\end{example} 
Example \ref{super3Dcubic} gives us a flavor of supermanifold cohomology  
calculations. 
In general, each supermanifold Hodge number is the sum of a number obtained 
from a DG scheme calculation and a number coming from the gaps of the category of singularities; see Table 
\ref{tab:HC31} and Definition \ref{OrlovSpec}. 
Further details can be found in \cite{GaravusoKatzarkovNoll}. 
 
On the symplectic side, the Example \ref{super3Dcubic} calculation can be seen tropically as
\cite{KRIS}:  
\begin{center}  
\includegraphics{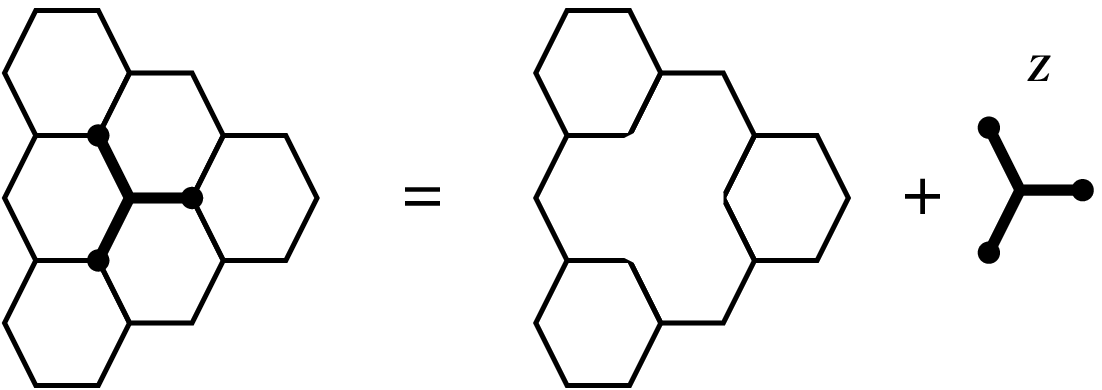} 
\end{center}
Here, $ Z $ is the mirror of $ \mathbf{WCP}^{1|2} $. 
We thus have the A-side categorical equivalence 
\begin{equation*}  
\Fuk(M)^d 
   = \left< \Fuk(\text{genus 4 curve}), \Fuk(Z),... \right>,  
\end{equation*} 
where $ \Fuk $ denotes the Fukaya category. 
In other words, the B-side noncommutative deformation corresponds to a conifold transition, which 
produces the Fukaya category of a genus $ 4 $ curve and the mirror of $ \mathbf{WCP}^{1|2} $. 
This is a special case of Theorem \ref{decompositionA-side}.

\subsection{Super and DG scheme version of Homological Mirror Symmetry}  
 
In what follows, $ D^b(A) $ is the bounded derived category of coherent sheaves on the scheme $ A $, $ D^b(A)^d $ is a (possibly trivial) deformation of $ D^b(A) $, and $ \langle D^b(A),\ldots \rangle $ is a semiorthogonal decomposition which includes $ D^b(A) $ as a summand. 
Similarly, $ \mathsf{Fuk}(A)^d $ is a (possibly trivial) deformation of  
$ \mathsf{Fuk}(A) $ and $ \langle \mathsf{Fuk}(A),\ldots \rangle $ is a semiorthogonal decomposition which includes $ \mathsf{Fuk}(A) $ as a summand. 
 
\begin{theorem} 
\label{decomposition} 
\begin{equation*}  
D^b \left( 
           \mathbf{WCP}^{m-1|n}_{ (Q_1, \ldots, Q_m|q_1, \ldots, q_n) } 
    \right)^d  
   = \left< 
            D^b \left( 
                       \mathbf{WCP}^{m-1} 
                       _{ (Q_1, \dots, Q_m)[q_1, \ldots, q_n] } 
                \right), \ldots  
     \right>. 
\end{equation*} 
\end{theorem}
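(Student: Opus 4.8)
The plan is to read this statement as the homological (derived-category) lift of the period mechanism underlying Section \ref{PeriodRelations}. In the derivation of relation (\ref{Relation1}) one integrates out the fermionic partners $\eta_a,\gamma_a$: the Grassmann integration selects the $\eta_a\gamma_a$ term, converting each fermionic coordinate $\xi_a$ of weight $q_a$ into a hypersurface contribution $e^{-X_a}$ of degree $q_a$. Performing this conversion on all $n$ fermions identifies the mirror period of $\mathbf{WCP}^{m-1|n}_{(Q_1,\ldots,Q_m|q_1,\ldots,q_n)}$ with that of the complete intersection $\mathbf{WCP}^{m-1}_{(Q_1,\ldots,Q_m)[q_1,\ldots,q_n]}$. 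I would therefore aim to promote this numerical coincidence to an equivalence of categories, with the deformation $(\cdot)^d$ encoding precisely the data that collapses when one passes from the refined categorical statement down to periods.

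First I would model $\mathbf{WCP}^{m-1|n}$ as a DG scheme over the underlying bosonic $\mathbf{WCP}^{m-1}$, exactly as in Example \ref{super3Dcubic}: the structure sheaf is the Koszul-type DG algebra generated over $\cO_{\mathbf{WCP}^{m-1}}$ by odd variables $\xi_1,\ldots,\xi_n$ placed in weights $q_1,\ldots,q_n$, and $D^b$ of the superspace is the derived category of coherent DG modules over it. The deformation $(\cdot)^d$ is then the Maurer--Cartan element that pairs the odd generators, i.e.\ turns on a quadratic coupling among the $\xi_a$; after this deformation the exterior directions assemble into a genuine Koszul complex resolving the complete intersection $\{G_1=\cdots=G_n=0\}$ of degrees $q_1,\ldots,q_n$.

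Next I would invoke the graded Orlov correspondence between the derived category of a complete intersection and the category of graded matrix factorizations (equivalently the singularity category of the associated Landau--Ginzburg model), combined with the spectra-and-gaps formalism of \cite{BFK}. A Kn\"{o}rrer-type periodicity argument converts each odd pair into one hypersurface equation, and the weight bookkeeping guarantees that the degrees of the resulting hypersurfaces are exactly $q_1,\ldots,q_n$. The BFK window / variation-of-GIT machinery then assembles these pieces into a semiorthogonal decomposition whose distinguished summand is $D^b(\mathbf{WCP}^{m-1}_{(Q_1,\ldots,Q_m)[q_1,\ldots,q_n]})$ and whose remaining summands, the undisplayed $\ldots$ terms, are precisely the gap categories. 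This runs parallel to the A-side statement Theorem \ref{decompositionA-side}, so an alternative route is to deduce the claim from that theorem through homological mirror symmetry applied to the period identity above.

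The main obstacle will be the precise identification of the deformation $d$ together with the proof that the deformed DG category is equivalent to the honest complete-intersection category up to the gaps. Matching the odd-variable Clifford/exterior structure to literal hypersurface equations via Kn\"{o}rrer periodicity is delicate once the weights $q_a$ differ, and controlling exactly which gap categories appear --- rather than merely tallying their contribution at the level of periods --- is what forces the full machinery of \cite{BFK}, with the complete details deferred to \cite{GaravusoKatzarkovNoll}.
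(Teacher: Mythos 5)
Your overall scaffolding --- treating $\mathbf{WCP}^{m-1|n}$ as a DG scheme over its bosonic reduction, passing through Orlov's theorem \cite{Orlov} to singularity/matrix-factorization categories, and producing a semiorthogonal decomposition whose undisplayed summands are the gap categories deferred to \cite{GaravusoKatzarkovNoll} --- is consistent with the paper's own (very terse) proof, which simply cites \cite{Orlov} together with a tensor factorization identity
$\mathbf{C}[\phi_1,\ldots,\phi_m,\xi_1,\ldots,\xi_n]_{((u))} = \mathbf{C}[\phi_{k+1},\ldots,\phi_m]_{((u))} \otimes \prod_{j=1}^k \mathbf{C}[\phi_j,\xi_1,\ldots,\xi_n]$, i.e.\ a factorization that splits \emph{bosonic} variables off against the odd ones. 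The genuine flaw is in your identification of the deformation $(\cdot)^d$. You take it to be ``a quadratic coupling among the $\xi_a$'' (a Clifford-type Maurer--Cartan element) and then claim that this turns the exterior directions into a Koszul complex resolving a complete intersection $\{G_1=\cdots=G_n=0\}$ of degrees $q_1,\ldots,q_n$. Those are two different deformations with different outputs. A quadratic form pairs odd generators with \emph{each other}: it deforms $\Lambda^\bullet\bigl(\oplus_a \cO(-q_a)\bigr)$ to a sheaf of Clifford algebras and can only account for roughly $n/2$ equations of degrees $q_a+q_b$ --- this is the mechanism relevant to the hypersurface examples (Examples \ref{Counterexample}, \ref{ExHyper2}), where $\xi_1\xi_2$ is added to $G_{bos}$. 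To produce the complete intersection appearing on the right-hand side of the theorem, with $n$ equations whose degrees match the $n$ fermionic weights one-for-one, the deformation must pair each \emph{single} odd generator $\xi_a$ with a bosonic polynomial $G_a$ of degree $q_a$, i.e.\ it is the Koszul differential $\xi_a \mapsto G_a$, linear rather than quadratic in the odd variables. Your own period-level motivation has the correct one-fermion-per-hypersurface counting, but your categorical step contradicts it (``each odd pair into one hypersurface equation''), and since this matching is the entire content of the theorem, the argument fails as written.

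A secondary problem: your fallback route --- deducing the statement from Theorem \ref{decompositionA-side} via Homological Mirror Symmetry --- inverts the paper's logic. There, the A-side statement is presented as the mirror \emph{consequence} of this B-side theorem, and the extension of HMS to DG schemes and supermanifolds is precisely what the pair of theorems is intended to establish; invoking super-HMS as an input would therefore be circular in this context.
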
 
The proof follows from \cite{Orlov} and is based on the identity 
\begin{equation*} 
\mathbf{C}[\phi_1, \ldots \phi_m, \xi_1, \ldots, \xi_n]_{((u))}  
   = \mathbf{C}[\phi_{k+1}, \dots, \phi_m]_{((u))}  
   \otimes \prod_{j=1}^k \mathbf{C}[\phi_j, \xi_1, \ldots, \xi_n] \, . 
\end{equation*} 
Theorem \ref{decomposition} gives a new proof of a theorem due to Bardelli and  
M\"{u}ller-Stach \cite{BMS}. 
We can restate Theorem \ref{decomposition} by saying that some noncommutative deformation of  
$ D^b \left( \mathbf{WCP}^{m-1|n}_{ (Q_1, \ldots, Q_m|q_1, \ldots, q_n) } \right) $ yields  
$ D^b \left( \mathbf{WCP}^{m-1}_{(Q_1, \ldots, Q_m)[q_1, \ldots, q_n]} \right) $.  
Alternatively, we can say that 
$ D^b \left( \mathbf{WCP}^{m-1|n}_{ (Q_1, \ldots, Q_m|q_1, \ldots, q_n) } \right) $ contains as a  
semiorthogonal summand a noncommutative Calabi-Yau (or Fano or general type manifold).  
The following theorem is the A-side version of this statement. 
\begin{theorem}  
\begin{equation*}
\label{decompositionA-side}   
\Fuk \left( \mathbf{WCP}^{m-1|n}_{ (Q_1, \ldots, Q_n | q_1, \ldots, q_n) }  
     \right)^d  
   = \left< 
            \Fuk  
            \left(  
                   \mathbf{WCP}^{m-1}_{ (Q_1, \ldots, Q_m)[q_1, \dots, q_n] } 
            \right),  
            \ldots 
     \right>. 
\end{equation*} 
\end{theorem}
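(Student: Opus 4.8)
The plan is to obtain the statement as the homological-mirror-symmetry image of Theorem \ref{decomposition}. The B-side decomposition was proved by invoking Orlov's theorem \cite{Orlov} together with the tensor factorization of the graded ring; on the A-side the corresponding tool is the symplectic (Fukaya--Seidel) analogue of Orlov's theorem, which relates the Fukaya category of a hypersurface or complete intersection to the Fukaya--Seidel category $\FS$ of the ambient Landau--Ginzburg model. The guiding principle is that mirror symmetry exchanges the two sides, so that for each space $X$ appearing in the statement there is an equivalence
\[
\Fuk \left( \mathbf{WCP}^{m-1|n}_{(Q_1,\ldots,Q_m|q_1,\ldots,q_n)} \right) \simeq D^b \left( \widetilde{\mathbf{WCP}}^{m-1|n}_{(Q_1,\ldots,Q_m|q_1,\ldots,q_n)} \right)
\]
identifying the symplectic geometry of the super projective space with the B-model of its super Landau--Ginzburg mirror, whose period was computed in Section \ref{WCP}.

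First I would fix the mirror dictionary: by the Hori--Vafa \cite{HoriVafa:Mirror} and Aganagic--Vafa \cite{AganagicVafa:Mirror} constructions used throughout this paper, the Fukaya category of each of the two spaces in the statement is equivalent to the matrix-factorization category of the corresponding super Landau--Ginzburg superpotential. Second, I would invoke the period relations of Section \ref{PeriodRelations}: the specialization of (\ref{Relation1})--(\ref{Relation2}) in which the fermionic weights $q_a$ are reinterpreted as hypersurface degrees identifies the mirror period of $\mathbf{WCP}^{m-1|n}_{(Q_1,\ldots,Q_m|q_1,\ldots,q_n)}$ with that of $\mathbf{WCP}^{m-1}_{(Q_1,\ldots,Q_m)[q_1,\ldots,q_n]}$. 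This is exactly the decategorified shadow of the desired equivalence and pins down which summand must appear. Third, having established both mirror equivalences, I would transport the semiorthogonal decomposition of Theorem \ref{decomposition} through the homological-mirror-symmetry functor: because an $A_\infty$-equivalence carries a semiorthogonal decomposition to a semiorthogonal decomposition, the summand $D^b(\mathbf{WCP}^{m-1}_{(Q_1,\ldots,Q_m)[q_1,\ldots,q_n]})$ becomes $\Fuk(\mathbf{WCP}^{m-1}_{(Q_1,\ldots,Q_m)[q_1,\ldots,q_n]})$, while the remaining factors assemble into the tail $\langle \, \cdots, \ldots \, \rangle$. Geometrically this is the conifold transition of Example \ref{super3Dcubic}, in which collapsing the fermionic directions produces the Fukaya category of the lower-dimensional bosonic hypersurface together with the mirror of the purely fermionic factor.

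The hard part will be making the super Fukaya category and the deformation $(-)^d$ rigorous: $\Fuk$ of a supermanifold is not defined a priori and must be understood through its mirror Landau--Ginzburg model, or equivalently through the DG/matrix-factorization description of \cite{GaravusoKatzarkovNoll}, and one must verify that the noncommutative deformation indicated by $(-)^d$ is precisely the one implementing the conifold transition rather than some unrelated family. Controlling the stops and the stop-removal that realize the Fukaya--Seidel analogue of Orlov's theorem, and checking that the spectra and gaps of \cite{BFK} match on the two sides, is where the genuine content lies; once this is in place, the compatibility of the semiorthogonal decomposition with the mirror functor is formal.
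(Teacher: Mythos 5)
Your proposal takes essentially the same route as the paper, which offers no independent argument for this theorem: it is presented there simply as the A-side counterpart of Theorem \ref{decomposition} under Homological Mirror Symmetry --- a ``symplectic version of Orlov's theorem'' per the paper's own remark --- with the conifold-transition picture of Example \ref{super3Dcubic} cited as a special case and all technical details (the super Fukaya category, the deformation, the matching of spectra and gaps) deferred to \cite{GaravusoKatzarkovNoll}. Your plan of transporting the B-side semiorthogonal decomposition through the mirror dictionary, using the period relations of Section \ref{PeriodRelations} as the decategorified check, and flagging exactly those deferred points as the remaining work, is this same strategy spelled out in more detail.
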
 
Figure \ref{ClassicalHMS} gives a schematic picture of classical Homological Mirror Symmetry in the version which is relevant for our purpose.  
For more details, see \cite{KRIS}. 
\begin{figure} 
\begin{center}  
\begin{tabular}[c]{c|c} \hline  
A-models (symplectic) & B-models (algebraic)  
\\ \hline \\  
$ X = (X,\omega) $ a closed symplectic manifold &  
$ X $ a smooth projective variety  
\\[1.2em]  
\begin{minipage}[t]{0.45\linewidth}  
\emph{Fukaya category} $ \Fuk(X)$: 
Objects are Lagrangian submanifolds $L$ which may be equipped with flat line  
bundles.  
Morphisms are given by Floer cohomology $ HF^*(L_0,L_1) $.  
\end{minipage} &  
\begin{minipage}[t]{0.45\linewidth}  
\emph{Derived category} $ D^b(X) $:  
Objects are complexes of coherent sheaves $ \mathcal{E} $.  
Morphisms are $ Ext^*({\mathcal E}_0,{\mathcal E}_1) $.  
\end{minipage}  
\\  
\multicolumn{2}{c}{\xymatrix{ \ar@{<->}[drr]&&\ar@{<->}[dll]\\ && } }  
\medskip  
\\  
\begin{minipage}[t]{0.45\linewidth}  
$Y$ a non-compact symplectic manifold with a proper map  
$ W:Y \to \mathbf{C} $ which is a symplectic fibration with singularities.  
\end{minipage}\ & \  
\begin{minipage}[t]{0.45\linewidth}  
$ Y $ a smooth quasi-projective variety with a proper holomorphic map  
$ W: Y \to \mathbf{C} $.  
\end{minipage}  
\\[3.5em] 
\\  
\begin{minipage}[t]{0.45\linewidth}  
\emph{Fukaya-Seidel category of the Landau-Ginzburg model} $ FS(LG)$:  
Objects are Lagrangian submanifolds $ L \subset Y $ which, at infinity, are  
fibered over $ \mathbf{R}^+ \subset \mathbf{C} $. 
The morphisms are $ HF^*(L_0^+,L_1) $, where the superscript  
$ + $ indicates a perturbation removing intersection points at infinity. 
\end{minipage} &  
\begin{minipage}[t]{0.45\linewidth}  
The category $ D^b_{sing}(W) $ of algebraic $ B $-branes which is obtained by  
considering the singular fibers $ Y_z = W^{-1}(z) $, dividing  
$ D^b(Y_z) $ by the subcategory of perfect complexes {\it Perf}$ \,(Y_z) $,  
and then taking the direct sum over all such $ z $.  
\end{minipage}  
\end{tabular}  
\end{center} 
\caption{Classical Homological Mirror Symmetry} 
\label{ClassicalHMS} 
\end{figure} 
Theorems \ref{decomposition} and \ref{decompositionA-side} can be seen as an extension of Homological Mirror Symmetry to the case of DG schemes. 
More details will appear in \cite{GaravusoKatzarkovNoll}. 
 
\begin{rem} 
Theorem \ref{decompositionA-side} can be seen as a symplectic version of Orlov's theorem \cite{Orlov}.  
\end{rem} 
 
\begin{rem} 
Theorem \ref{decompositionA-side} can be extended to hypersurfaces and complete intersections
in general toric supervarieties.
\end{rem} 
Our categorical findings are summarized in Figures \ref{fig:genpic} and \ref{fig:table_b}. 
These figures give a natural super generalization of classical Homological Mirror Symmetry. 
 
\begin{figure}[htbp] 
  \centering 
  \begin{tabular}{c} 
    \hline\hline 
    \begin{minipage}[c]{.93\linewidth} 
      \[\xymatrix{ 
        D^b(\text{n.c. def. of Fano CY})\ar@{=}[rr]\ar@{=}[dr] 
        && 
         \left< D^b(\text{Complete int.}), \ldots  
\right>\ar@{=}[dl] 
        \\ 
        & 
        D^b(\text{Orbifold}) } 
      \] 
    \end{minipage} 
    \\\hline 
    \begin{minipage}[c]{.97\linewidth} 
      \begin{equation*} 
        \xymatrix{ 
        \FS(\LG(\text{n.c. def. of Fano CY}))\ar@{=}[rr]\ar@{=}[dr] 
        && 
        \left< \FS(\LG(\text{Complete int.})), \ldots \right>\ar@{=}[dl] 
        \\ 
        & 
        \FS(\text{Orbifold}) } 
       \end{equation*} 
    \end{minipage} 
    \\\hline\hline 
  \end{tabular} 
  \caption{General picture of noncommutative deformations } 
  \label{fig:genpic} 
\end{figure} 
 
\begin{figure}[htbp] 
  \centering 
  \begin{tabular}{p{0.41\textwidth}|p{0.47\textwidth}} 
    \hline\hline 
    \begin{minipage}[c]{1.0\linewidth}    
      \begin{align*} 
         D^b \left( 
                    \mathbf{WCP}^{m-1|n} 
                    _{ (Q_1, \dots, Q_m | q_1, \ldots , q_n )[s] } 
             \right)^d  
      \end{align*}
      \vskip 11pt 
    \end{minipage} 
    & 
    \begin{minipage}[c]{1.0\linewidth} 
      \begin{equation*}
      \left\langle       
              D^b \left( 
                         \mathbf{WCP}^{m-1}_{(Q_1, \ldots, Q_m)[s, q_1, \dots, q_n]} 
                  \right),\ldots
      \right\rangle      
      \end{equation*}
\vskip 11pt     
    \end{minipage} 
    \\\hline 
    \begin{minipage}[c]{1.0\linewidth} 
      \begin{equation*} 
         \FS\left(\LG 
                      \left( 
                             \mathbf{WCP}^{m-1|n} 
                             _{ (Q_1, \ldots, Q_m | q_1, \ldots, q_n)[s] } 
                      \right)
            \right)^d          
      \end{equation*}
      \vskip 11pt 
    \end{minipage} 
    & 
    \begin{minipage}[c]{1.0\linewidth} 
      \begin{equation*}
      \left\langle 
                   \FS 
                       \left(\LG  
                                 \left( 
                                        \mathbf{WCP}^{m-1} 
                                        _{ (Q_1, \ldots, Q_m)[s, q_1, \dots, q_n] } 
                                 \right) 
                       \right),\ldots
      \right\rangle 
      \end{equation*}
\vskip 11pt     
    \end{minipage} 
    \\\hline  
   \hline\hline 
  \end{tabular} 
  \caption{Super Homological Mirror Symmetry} 
  \label{fig:table_b} 
\end{figure} 
\begin{rem}
Theorems \ref{decomposition} and \ref{decompositionA-side} suggest that different Landau-Ginzburg models can be associated to different noncommutative Hodge substructures in the Hochshild homology of $ D^b $ or Fukaya categories.
\end{rem}

\subsection{Algebro-geometric applications} 
 
The above extension of classical Homological Mirror Symmetry to the case of supermanifolds  
and DG schemes suggests some applications which we discuss next. 
 
\begin{defi}[Fano CY]  
We call a manifold a \emph{Fano CY manifold} if its Hodge diamond contains the  
Hodge structure of a CY (possibly noncommutative) manifold; see Figure~\ref{fig:fanoCY}. 
\end{defi} \pagebreak
 
\begin{figure}[htbp] 
  \centering 
  \includegraphics{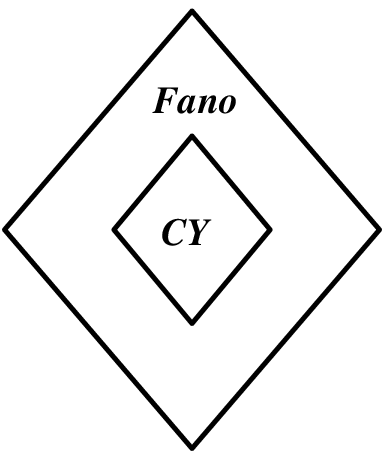} 
  \caption{Definition of Fano CY manifold} 
  \label{fig:fanoCY} 
\end{figure}
 
\begin{example}
\label{7-DFanoCY} 
$ X = \{ \phi_1^3 + \cdots + \phi_9^3 = 0 \} \subset \mathbf{CP}^8 $ is a seven-dimensional Fano CY. 
\end{example}
 
\begin{theorem} 
Let $ X $ be the Fano CY defined in Example \ref{7-DFanoCY}. 
Then $ Gr_3(X) $ is infinitely generated.  
\end{theorem}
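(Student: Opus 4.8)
The plan is to realise $X$ as a \emph{Fano CY} in the sense of Figure~\ref{fig:fanoCY}, whose hidden Calabi--Yau threefold Hodge structure produces a \emph{non-algebraic} intermediate Jacobian, and then to run the infinitesimal normal-function argument of Griffiths, Clemens and Voisin in this shifted weight, feeding it with cycles swept out by the Fano variety of planes on $X$.

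First I would pin down the Hodge theory of the middle cohomology $H^{7}_{\mathrm{prim}}(X)$ through the Jacobian ring $R=\C[\phi_1,\dots,\phi_9]/(\phi_1^{2},\dots,\phi_9^{2})$ of the Fermat cubic. Griffiths' residue description $H^{7-a,a}_{\mathrm{prim}}(X)\cong R_{3a-6}$ gives $h^{5,2}=h^{2,5}=1$ and $h^{4,3}=h^{3,4}=\binom{9}{3}=84$, with all other weight-$7$ Hodge numbers zero. Thus the Tate twist $H^{7}_{\mathrm{prim}}(X)(2)$ is a polarised weight-$3$ Hodge structure of Calabi--Yau type with diamond $1,84,84,1$ -- precisely the Calabi--Yau data met in Example~\ref{Counterexample} and the weight-$3$ summand predicted by the semiorthogonal decomposition of Theorem~\ref{decomposition}. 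Since $h^{5,2}=1\neq0$, the intermediate Jacobian $J^{4}(X)=H^{7}(X,\C)/(F^{4}+H^{7}(X,\mathbb{Z}))$ is a Griffiths torus rather than an abelian variety; this failure of algebraicity is the sole engine of the theorem, exactly as $h^{3,0}\neq0$ is for a quintic threefold.

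Next I would manufacture genuinely moving families of dimension-$3$ cycles landing in this $J^{4}(X)$. The natural source is the Fano variety $F_{2}(X)$ of $2$-planes $\mathbf{CP}^{2}\subset X$, which has dimension $(2{+}1)(8{-}2)-\binom{5}{2}=8$; letting $P\subset F_{2}(X)\times X$ be the incidence variety with projections $p$ and $q$, the cylinder correspondence $\Psi=q_{*}p^{*}$ sends a curve $\Gamma\subset F_{2}(X)$ to the $3$-fold $\bigcup_{\ell\in\Gamma}\ell\subset X$, so homologically trivial differences of such curves produce homologically trivial dimension-$3$ cycles. The first Hodge-theoretic task is to show that the morphism of Hodge structures induced by $\Psi$ hits the transcendental part $H^{5,2}\oplus H^{4,3}$ of $H^{7}(X)$; spreading $\Gamma$ over the universal cubic sevenfold $\pi\colon\mathcal{X}\to B$ then yields a normal function $\nu$ with values in the bundle of intermediate Jacobians, and the crux is to compute its infinitesimal invariant $\delta\nu$ via cup product with the Kodaira--Spencer class in the graded pieces of $R$ and to show $\delta\nu\neq0$. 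This is the cubic-sevenfold analogue of Clemens' nonvanishing for lines on the quintic.

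Finally, with $\delta\nu\neq0$ in hand and the monodromy of $R^{7}\pi_{*}\mathbb{Q}$ on $B$ large enough to admit no nonzero fixed sub-Hodge-structure of Calabi--Yau type (a standard big-monodromy fact for hypersurfaces), the Green--Griffiths--Voisin criterion together with a Baire-category argument force the image of the Abel--Jacobi map on homologically trivial dimension-$3$ cycles to be infinitely generated modulo algebraic equivalence for very general members of the family; hence $Gr_{3}$ of a very general cubic sevenfold is infinitely generated. The two hard points are precisely the previous paragraph -- the explicit nonvanishing of $\delta\nu$ together with the verification that $\Psi$ meets the transcendental part -- and the descent to the distinguished Fermat member of Example~\ref{7-DFanoCY}, which is not general; the latter I would handle either by a specialisation argument or by exploiting the large automorphism group of the Fermat cubic to write down the required independent cycles directly. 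The remaining supercohomological and categorical input is detailed in \cite{GaravusoKatzarkovNoll}.
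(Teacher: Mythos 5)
You take a genuinely different route from the paper. The paper's proof is categorical: by Theorem \ref{decomposition}, $D^b(X + \xi_1\xi_2 + \xi_3\xi_4) = D^b(NCY)$ for a noncommutative Calabi--Yau $NCY$ of superdimension three, hence $Gr_3(X) = Gr_1(NCY)$, and the conclusion comes from applying a noncommutative modification of Voisin's theorem \cite{VOI} to the super Calabi--Yau integrable system of \cite{GRIFAT}. You instead work classically and directly on $X$: the Griffiths residue calculus identifies the primitive weight-seven Hodge structure as one of Calabi--Yau threefold type $(1,84,84,1)$ (your computation is correct, and the match with the $84$'s of Example \ref{Counterexample} is precisely the coincidence the paper's framework is built around), and you then run the Clemens--Green--Voisin normal-function and infinitesimal-invariant machinery on cycles swept out by the cylinder correspondence from the eight-dimensional Fano variety of planes. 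This is essentially the published argument of Albano and Collino on the Griffiths group of the cubic sevenfold (Math. Ann. 299 (1994)), so your route has the virtue of resting on established classical Hodge theory rather than on the in-preparation ingredients the paper's proof needs (the noncommutative Voisin theorem, the integrable system of \cite{GRIFAT}, and the identification $Gr_3(X)=Gr_1(NCY)$). What the paper's route buys in exchange is uniformity --- the same categorical argument is claimed to apply to many other examples --- and the fact that it attaches the Calabi--Yau data to $X$ itself rather than to a deformation of it.

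That last point is where your sketch has a genuine gap. Every ingredient of your final paragraph (nonvanishing of $\delta\nu$, big monodromy, Baire category) yields infinite generation only for the \emph{very general} member of the family of cubic sevenfolds, whereas the theorem concerns the Fermat member $X$ of Example \ref{7-DFanoCY}, which is maximally special. Neither of your proposed fixes closes this. Infinite generation does not specialize: under specialization, Abel--Jacobi-independent cycles on nearby fibers can become dependent or torsion on the special fiber, and there is no semicontinuity statement for Griffiths groups, so ``specialisation argument'' is not yet an argument. And producing infinitely many independent cycles on the Fermat itself from its automorphism group is a hard problem of a different, arithmetic (Schoen-type) nature, not a routine verification. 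So, as written, your proof establishes the statement for a very general cubic sevenfold but not for the stated $X$. To be fair, the paper's own proof is also a sketch and buries the analogous genericity issue inside the unpublished noncommutative Voisin theorem applied to a universal variation of Hodge structure; but it at least formally ties $Gr_3(X)$ to an invariant of $X$ itself via Theorem \ref{decomposition} before invoking that input, which is exactly the step your classical route cannot imitate.
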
 
 
\begin{proof} 
It follows from Theorem \ref{decomposition} that  
\begin{equation*} 
D^b(X + \xi_1 \xi_2 + \xi_3 \xi_4) =  D^b(NCY) \, .   
\end{equation*} 
Here, $ NCY $ is a noncommutative Calabi-Yau of superdimension three with a DG scheme structure coming from the anticommuting variables  $ \xi_1, \xi_2 , \xi_3 ,\xi_4 $.
Thus,
\begin{equation*} 
Gr_3(X) = Gr_1(NCY) \, .  
\end{equation*} 
It follows from  \cite{GRIFAT} that we have a super Calabi-Yau integrable system arising from the universal variation of Hodge structures of the noncommutative Calabi-Yau manifold.
A noncommutative modification of a theorem of Voisin \cite{VOI} applied to the above integrable system yields the result that $ Gr_1(NCY) $ is infinitely generated. 
\end{proof} 
The above argument applies to many other examples. 
For more details see \cite{GRIFAT}.

\subsection{Gaps and spectra} 
 
In this subsection, we review the notions of spectra and gaps following \cite{BFK}. 
 
Noncommutative Hodge structures were introduced in \cite{KKP1} as a means of bringing the techniques and tools of Hodge theory into the categorical and noncommutative realm.   
In the classical setting, much of the information about an isolated singularity is contained in the Hodge spectrum (a set of rational eigenvalues of the monodromy operator).   
A categorical analogue of the Hodge spectrum appears in the works of Orlov \cite{Orlov:Remarks} and Rouqier \cite{Rouquier:Dimensions}. 
Let us call this analogue the Orlov spectrum (a rigorous definition will appear below).   
Very little is known about the Orlov spectrum.   
However, recent work \cite{BFK} suggests an intimate connection with the classical theory. 
 
Let us recall the definition of the Orlov spectrum and discuss some of the main results in 
\cite{BFK}.  
Let $ \mathcal T $ be a triangulated category.   
For any $ G \in \mathcal T $, denote by $ \langle G \rangle_0 $ the smallest full subcategory containing $ G $ which is closed under isomorphisms, shifting, and taking finite direct sums and summands.  
Now, inductively define $ \langle G \rangle_n $ as the full subcategory of objects $ B $ such that there is a distinguished triangle  
$ X \to B \to Y \to X $, with $ X \in \langle G \rangle_{n-1} $ and  
$ Y \in \langle G \rangle_0 $. 
 
\begin{defi}
\label{OrlovSpec} 
Let $ G $ be an object of a triangulated category $ \mathcal{T} $.  
If there is an $ n $ with $ \langle G \rangle_{n} = \mathcal T $, we set  
\begin{displaymath} 
 \tritime(G):=  \text{min } \lbrace  n \geq 0 \  | \ \langle G 
 \rangle_{n} = \mathcal T \rbrace.  
\end{displaymath} 
Otherwise, we set $\tritime(G) := \infty$.  
We call $ \tritime(G) $ the \textbf{generation time} of $ G $. 
If $ \tritime(G) $ is finite, we say that $ G $ is a \textbf{strong generator}. 
The \textbf{Orlov spectrum} of $ \mathcal T $ is the union of all possible generation times for 
strong generators of $ \mathcal T $.  
The \text{Rouquier dimension} is the smallest number in the Orlov spectrum.  
We say that a triangulated category $ \mathcal T $ has a \textbf{gap} of length $ s $, if $ a $ and 
$ a+s+1 $ are in the Orlov spectrum but $ r $ is not in the Orlov spectrum for $ a < r < a+s+1 $. 
\end{defi} 
 
\begin{conj} 
\label{gap bound} 
If $ X $ is a smooth variety then any gap of $ D^{b}(X) $ has length at most 
the Krull dimension of $ X $. 
\end{conj}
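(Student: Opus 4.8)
The plan is to bound each gap by comparing the generation times of different strong generators, using the finite global homological dimension of a smooth variety to control the elementary moves relating them, and taking the classical Hodge spectrum — where the analogous bound holds for elementary reasons — as the guide for why $\dim X$ is the right number.

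First I would fix a strong generator $G$ of $D^b(X)$; for a smooth projective variety one exists, and by Rouquier's resolution-of-the-diagonal argument \cite{Rouquier:Dimensions} the Rouquier dimension satisfies $\dim X\le\dim D^b(X)\le 2\dim X$, so the bottom of the spectrum is anchored near the Krull dimension. The key structural input on a smooth $X$ is the global homological bound: for coherent sheaves $E,F$ one has $\mathrm{Ext}^i(E,F)=0$ for $i>\dim X$, so every Postnikov tower expressing an object in terms of $G$ has length controlled by $\dim X$. This bounds generation time from above.

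The heart of the argument is an interpolation between strong generators. Given consecutive spectrum values $a$ and $a+s+1$, realized by generators $G$ and $G'$, I would connect them by a chain of elementary moves — adjoining or deleting a single object — and bound the change in generation time at each move by $\dim X$, using the Ext-vanishing above together with the Serre functor of the smooth proper variety $X$. Since no value in the open interval $(a,a+s+1)$ is attained, this would force $s\le\dim X$. The reason $\dim X$ is the expected bound comes from the connection, suggested in \cite{BFK}, between the Orlov spectrum and the classical Hodge spectrum of a singularity: the latter, for a potential in $\dim X+1$ variables, lies in an interval of length $\dim X$, so its gaps are automatically bounded by the ambient dimension.

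The main obstacle — and the reason the statement remains a conjecture — is the interpolation step. The Ext-vanishing bounds the \emph{length} of resolutions, but generation time is a minimum over \emph{all} presentations of a generator, so an upper bound on resolution length does not translate into a lower bound on how far the generation time can fall when a summand is deleted. Controlling this requires a uniform comparison valid for every pair of strong generators, equivalently a precise form of the dictionary between the categorical generation time and the classical Hodge spectrum suggested in \cite{BFK}. Establishing that dictionary for an arbitrary smooth variety is exactly the open problem.
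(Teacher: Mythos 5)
The statement you were asked to prove is Conjecture \ref{gap bound} of the paper. It is stated there \emph{without proof}: it is a conjecture, drawn from the circle of ideas in \cite{BFK}, and it remains open. So there is no proof in the paper to compare your proposal against, and the correct outcome of a blind attempt is essentially what you produced — a strategy that stalls at a precisely identified point. You were right to flag the failure rather than pretend the argument closes.

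On the substance of your sketch: the ingredients you invoke are individually sound. Rouquier's bounds $\dim X \le \dim D^b(X) \le 2\dim X$ for smooth quasi-projective $X$ are correct, and for coherent sheaves $E,F$ on a smooth projective variety the vanishing $\mathrm{Ext}^i(E,F)=0$ for $i>\dim X$ follows from Serre duality. But these facts control the bottom of the Orlov spectrum and the length of individual Postnikov towers; they say nothing about the \emph{spacing} of the spectrum, which is what a gap bound requires. Your central "interpolation by elementary moves" step is not a known technique: there is no result bounding how generation time changes as one passes between strong generators, and the minimum-over-all-presentations issue you yourself point out is exactly why no such bound is available. In other words, the gap you identify is not a technical lacuna in an otherwise complete argument — it coincides with the entire content of the conjecture. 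What \cite{BFK} provides is evidence (explicit spectra, bounds in examples, and the analogy with the classical Hodge spectrum you cite), not a reduction of the conjecture to an interpolation lemma; so your proposal should be read as a heuristic roadmap rather than as a proof modulo one missing step.
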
 
In the noncommutative situation, the gaps could be larger.  
We will now explain how the notions of gaps and spectra connect with supercohomology calculations.

\subsection{Supermanifolds and exotic \texorpdfstring{$ (p,p) $}{(p,p)} cycles} 
 
In this section, we look at Example \ref{Counterexample} and Example \ref{ExHyper2} from the  
perspective of the Hodge conjecture, i.e. every $ (p,p) $ cycle is algebraic. 
We begin with the following. 
 
\begin{conj}
\label{SuperMatrixFactorization} 
(``super matrix factorization'' version of the Hodge conjecture; see \cite{FK}) 
Every $ (p,p) $ class in the Jacobian ideal (Hochschild cohomology) acts nontrivially on the category of singularities. 
\end{conj}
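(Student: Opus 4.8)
The plan is to translate the statement into the language of matrix factorizations and the Orlov spectrum: first identify the acting algebra as the super Jacobian ring, then separate the purely formal nonvanishing of the action from its geometric realization, which is where the Hodge-conjecture content lies.

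First I would fix the Landau-Ginzburg model of case (iii), whose mirror superpotential $\widetilde{W}$ of (\ref{Wtilde}) is quasihomogeneous exactly when the super Calabi-Yau condition (\ref{CompactSCYH-Condition}) holds, by the theorem proven in the Appendix. Passing to matrix factorizations, I would identify the category of singularities with $MF(\widetilde{W})$ and use the Hochschild computation $HH^*(MF(\widetilde{W})) \cong \mathrm{Jac}(\widetilde{W}) = \cO/(\partial\widetilde{W})$. The quasihomogeneous weights furnish a grading on $\mathrm{Jac}(\widetilde{W})$ which, through Griffiths residues, matches the Hodge grading of the crepant resolution (\ref{WCP[s]mirror}); the $(p,p)$ classes then occupy a single graded component — the ``Jacobian ideal'' named in the statement.

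Next I would make the action precise. A class $\alpha \in HH^*(MF(\widetilde{W}))$ defines, via the closed-open map, an element of $\mathrm{End}^*(\mathrm{id})$, so its \emph{formal} nonvanishing on the category is automatic from the isomorphism of the previous step. The substantive content is therefore geometric: the class must act through an object representing an actual algebraic cycle, equivalently be visible in the gap structure of the Orlov spectrum (Definition \ref{OrlovSpec}, \cite{BFK}). To exhibit this I would use the semiorthogonal decomposition of Theorem \ref{decomposition} to split off a noncommutative Calabi-Yau summand — just as in the argument that $Gr_3(X)$ is infinitely generated — and then invoke the dictionary recorded above, whereby each supermanifold Hodge number decomposes as a DG-scheme contribution plus a contribution coming from gaps. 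Under this dictionary the $(p,p)$ classes not already produced by the DG-scheme calculation are precisely those realized by gaps, and a nonempty gap forces $\alpha_E \neq 0$ for some $E$.

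The hard part will be this final realization step: for an arbitrary nonzero $(p,p)$ class, producing an explicit object that detects it is equivalent to the Hodge conjecture in the present setting, so the surjectivity of the cycle-class/action map onto the $(p,p)$ component is as deep as the classical statement. I expect the specific geometries of Example \ref{Counterexample} and Example \ref{ExHyper2} to be reachable by explicit matrix-factorization constructions combined with a Voisin-type argument applied to the super Calabi-Yau integrable system on the noncommutative summand, in the spirit of \cite{GRIFAT} and \cite{VOI}; the general conjecture, however, should remain open for the same reason the Hodge conjecture itself does.
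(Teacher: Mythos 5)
The statement you set out to prove is a \emph{conjecture}: the paper offers no proof of Conjecture \ref{SuperMatrixFactorization}, and explicitly grounds it (together with Conjecture \ref{SuperHodgeGapCorrespondence}) in the equality (\ref{tritime(G)}) relating generation times, gaps, and the monodromy effect $ME$, which the authors themselves describe as ``not yet completely established,'' deferring details to \cite{GaravusoKatzarkovNoll} and \cite{FK}. So there is no proof in the paper to match your argument against, and no complete proof should be expected to exist at the level of the present paper.

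Your proposal, to its credit, ends by acknowledging exactly this, but that acknowledgment is also where it fails as a proof: after the standard identifications (Hochschild cohomology of the matrix factorization category as the Jacobian ring of $\widetilde{W}$, the quasihomogeneous grading coming from Theorem \ref{SCY-QH}, the splitting via Theorem \ref{decomposition}), the ``realization step'' --- producing, for every nonzero $(p,p)$ class in the Jacobian ideal, an object of the category of singularities on which it acts nontrivially --- is precisely the content of the conjecture, so your argument reduces the statement to itself. Two further cautions. First, your claim that formal nonvanishing of the action is ``automatic'' from $HH^* \cong \mathrm{End}(\mathrm{id})$ cannot be the intended reading: if it were, the conjecture would be a triviality, whereas the paper explicitly contemplates failure (the strictness properties of \cite{FK} ``may obstruct the appearance of small gap drops and hence yield counterexamples of the Hodge conjecture''); the intended notion of nontrivial action is the stronger, gap-theoretic one tied to Definition \ref{OrlovSpec} and \cite{BFK}, and that is not automatic. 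Second, the Voisin-type argument via \cite{GRIFAT} and \cite{VOI} is used in the paper only to prove infinite generation of $Gr_3$ for a specific Fano CY; it is not claimed to verify the $(p,p)$ realization statement even in Examples \ref{Counterexample} and \ref{ExHyper2}, which the paper offers merely as candidate test cases, with the actual computations deferred to \cite{GaravusoKatzarkovNoll}.
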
 
 
\begin{conj}
\label{SuperHodgeGapCorrespondence} 
There is a correspondence between ``missing'' supermanifold Hodge numbers and increasing gaps of categories of singularities. 
\end{conj}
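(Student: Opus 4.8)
The plan is to make both sides of the asserted correspondence precise and then identify them term by term. On the Hodge side, I would first fix the decomposition already indicated after Example \ref{super3Dcubic}: for a super Calabi--Yau hypersurface the correct supermanifold Hodge number $h^{p,q}$ splits as a contribution $h^{p,q}_{\mathrm{DG}}$ read off from the DG scheme structure sheaf (exactly as in Example \ref{super3Dcubic}) plus a contribution $h^{p,q}_{\mathrm{gap}}$ coming from the category of singularities $D^b_{sing}(\widetilde{W})$. The heuristic arguments of \cite{Sethi} are, by construction, sensitive only to the DG scheme piece; hence the \emph{missing} Hodge numbers observed in \cite{GaravusoKreuzerNoll:Fano} should be precisely the differences $h^{p,q}-h^{p,q}_{\mathrm{DG}}=h^{p,q}_{\mathrm{gap}}$. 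The first task is therefore to prove this splitting rigorously using the techniques of \cite{GaravusoKatzarkovNoll}, so that ``missing'' acquires an intrinsic meaning independent of any particular heuristic.

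Next I would connect $h^{p,q}_{\mathrm{gap}}$ to the Orlov spectrum. Theorem \ref{decomposition} exhibits the relevant derived category as a semiorthogonal decomposition whose nontrivial summand is a bosonic complete intersection together with a residual piece supported on the singular locus of $\widetilde{W}$. Restricting to this residual piece, I would invoke the results of \cite{BFK} relating the generation times $\tritime(G)$ and the gaps of the Orlov spectrum (Definition \ref{OrlovSpec}) to the classical Hodge spectrum of the isolated singularity, i.e. to the monodromy eigenvalues. The statement to establish is that each unit increase in the length of a gap forces exactly one extra eigenvalue into a prescribed graded piece of the vanishing cohomology, and that this eigenvalue carries the $(p,q)$-type of the missing Hodge number. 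Conjecture \ref{SuperMatrixFactorization} supplies the needed input that each $(p,p)$ class in the Jacobian ideal acts nontrivially on the category of singularities, ensuring that no gap contribution is invisible to this count.

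Concretely, I would test the correspondence on Examples \ref{Counterexample} and \ref{ExHyper2}, where the discrepancy between the heuristic diamond and the Landau--Ginzburg orbifold diamond is known explicitly: computing $\tritime$ for a generator of $D^b_{sing}(\widetilde{W}^{(q_1,q_2)})$ and reading off its gaps should reproduce the missing entries (the $84$ and the $73$, together with the off-diagonal $1$'s) slot by slot. Matching these two cases would pin down the exact normalization of the correspondence and guide the general statement.

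The hard part will be the middle step: the link between Orlov-spectrum gaps and the Hodge-theoretic spectrum is itself only conjectural (it is the content of the gap-bound Conjecture \ref{gap bound} and the suggestions of \cite{BFK}), so a complete proof presupposes a precise dictionary between generation time and monodromy that does not yet exist. Compounding this, there is still no rigorous super-cohomology theory, so even the left-hand side $h^{p,q}_{\mathrm{gap}}$ must be defined through the DG scheme and category-of-singularities package of \cite{GaravusoKatzarkovNoll} rather than through a classical Hodge decomposition. Bridging these two partially-developed frameworks --- turning the numerical coincidences of the examples into a functorial isomorphism between the missing graded pieces and the gap filtration of $D^b_{sing}$ --- is where the real difficulty lies.
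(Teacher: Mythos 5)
Your proposal cannot be matched against a proof in the paper, because the paper does not prove this statement: it is stated as Conjecture \ref{SuperHodgeGapCorrespondence} and left open. The only support the paper offers is (a) the equality (\ref{tritime(G)}), $\tritime(G) = \tritime(G+G') + B(G')\cdot l_{G'}(G) - ME$, which the authors explicitly flag as ``not yet completely established,'' (b) the observation that the supermanifolds of Examples \ref{Counterexample} and \ref{ExHyper2} are \emph{candidates} for displaying the conjectured behavior, and (c) a deferral of all details to the in-preparation references \cite{GaravusoKatzarkovNoll} and \cite{FK}. So the honest target here is a precise formulation plus evidence, not a demonstration, and that is all the paper provides.

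Judged as a proof, your proposal has a genuine gap, and you locate it yourself: the middle step --- a dictionary under which each unit increase in gap length of the Orlov spectrum of $D^b_{sing}$ produces exactly one class of prescribed $(p,q)$-type --- does not follow from \cite{BFK}, from Theorem \ref{decomposition}, or from Conjecture \ref{SuperMatrixFactorization}; it is essentially a restatement of the conjecture you are trying to prove, so the argument is circular at its core. Two further mismatches with the paper's own heuristic are worth noting. First, your normalization (``each unit increase forces exactly one extra eigenvalue'') is sharper than, and not implied by, the paper's bookkeeping identity (\ref{tritime(G)}), in which the correction involves the product $B(G')\cdot l_{G'}(G)$ and a monodromy effect $ME$, neither of which appears in your count. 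Second, your starting point, the splitting $h^{p,q}=h^{p,q}_{\mathrm{DG}}+h^{p,q}_{\mathrm{gap}}$, is itself part of what is deferred to \cite{GaravusoKatzarkovNoll}; since no rigorous supercohomology theory exists in the paper, it cannot be assumed as given. That said, your plan reconstructs, blind, essentially the same circle of ideas the paper uses as motivation --- the semiorthogonal decomposition of Theorem \ref{decomposition}, the Jacobian-ideal action of Conjecture \ref{SuperMatrixFactorization}, and the two test examples --- which suggests the strategy is the intended one; but as written it is a research program, not a proof, and the paper makes no claim to more.
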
 
The supermanifolds of Examples \ref{Counterexample} and \ref{ExHyper2} are candidates for displaying the behavior described in Conjecture \ref{SuperHodgeGapCorrespondence}.
Other candidates may be obtained by following the procedure described in \cite{Sethi,GaravusoKreuzerNoll:Fano}. 
 
Conjectures \ref{SuperMatrixFactorization} and \ref{SuperHodgeGapCorrespondence} are based on the (not yet completely established) equality 
\begin{equation}
\label{tritime(G)} 
\tritime(G) = \tritime(G+G') + B(G') \cdot l_{G'}(G) - ME \, .
\end{equation} 
Here, $ l_{G'}(G) $ is the number of steps in which $ G' $ generates $ G $, $ B(G') $ measures how far $ \textrm{End}(G') $ is from being formal, and $ ME $ is the monodromy effect on the mirror side. 
 
\begin{rem} 
Theorem \ref{decomposition} and Equation (\ref{tritime(G)}) explains the connection with  
the supercohomology calculations indicated in Examples \ref{Counterexample} and \ref{ExHyper2}. 
Indeed what needs to be computed there are the Hochshild cohomologies of deformed categories; gaps are responsible for ``missing'' Hodge numbers. 
In order to do the calculations, one needs to introduce an enhanced notion of spectra.
This will be explained in detail in \cite{GaravusoKatzarkovNoll}. 
\end{rem}

Based on Conjectures \ref{SuperMatrixFactorization} and \ref{SuperHodgeGapCorrespondence}, two kinds of spectral anomalies are possible:   
\begin{enumerate}

\item 
Big gap drops as a consequence of appearance of new $ (p,p) $ classes. 
The first two rows of Table \ref{tab:HC31} and the tropical Abelian varieties of \cite{KRIS} are examples.  

\item 
Preservation of big gaps, i.e. rather small disappearance of gaps as a consequence of new $ (p,p) $ classes.
The third row of Table \ref{tab:HC31} is an example. 

\end{enumerate} 
  
\begin{table}
\begin{center} 
\begin{tabular}{|c|c|} 
\hline 
\begin{minipage}[c]{2.75in} 
\centering 
\medskip

Supermanifolds
 
$ M = \{ \phi_1^3 + \cdots + \phi_9^3 + \xi_1 \xi_2 + \xi_3 \xi_4 = 0 \} $
$ \in \mathbf{WCP}^{8|4}_{(1,1,1,1,1,1,1,1,1|q_1,q_2,q_3,q_4)[3]} $ 
\medskip 
 
\end{minipage} 
& 
\begin{minipage}[c]{2.25in} 
\centering 
\medskip

$ X = \{ \phi_1^3 + \cdots + \phi_9^3  = 0 \} \subset \mathbf{CP}^8 $
\vskip 11pt
$ Gap( D^b(X) ) = 6 $
 
$ Gap( D^b(M) ) = 4 $ 
 
\medskip 
 
\end{minipage} 
\\ 
\hline 
\begin{minipage}[c]{2.75in} 
\centering 
\medskip 
 
Two-dimensional noncommutative tori $ T $ 
 
$$\includegraphics[width = 0.4\textwidth]{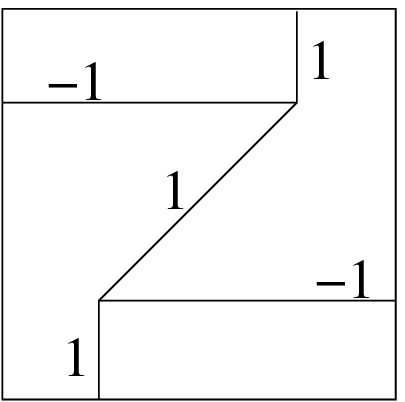}$$ 
 
\medskip 
 
\end{minipage} 
& 
\begin{minipage}[c]{2.25in} 
\centering 
\medskip 
 
$ Gap(\Fuk(Generic)) = \infty $ 
 
$ Gap(\Fuk(T)) = 3 $ 
 
\end{minipage} 
\\\hline 
\begin{minipage}[c]{2.75in} 
\centering 
\medskip 
 
Voisin--Thomas 
 
$$\includegraphics[width = 0.7\textwidth]{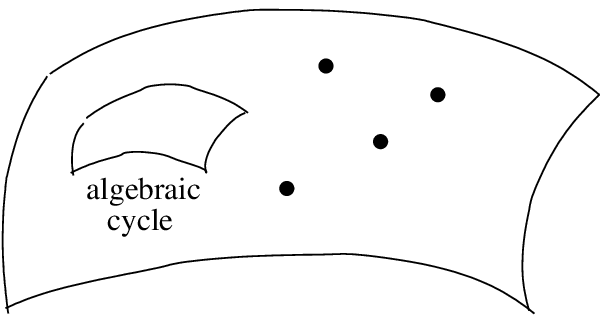}$$ 
 
Too few Lagrangian spheres in superhypersurfaces 
 
\medskip 
 
\end{minipage} 
& 
\begin{minipage}[c]{2.25in} 
\centering 
\medskip 
 
$ Gap(Generic)=n $ 
 
$ Gap(Special)=n-k $,  $ k <  2 $ 
 
\medskip 
 
\end{minipage} 
\\\hline 
\end{tabular} 
\end{center} 
\caption{Gap drops} 
\label{tab:HC31} 
\end{table}
 
In Table \ref{tab:HC32}, we connect small gap drops with coniveau filtration; see \cite{BEI}.
The first row of the table explains the action of elements of the Jacobian ring and tropical
$ (p,p) $ cycles on the category of matrix factorizations and Fukaya categories, respectively. 
The trace formula shows when this action is trivial.
This allows us to compute $ l_{G'}(G) $ and $ B(G') $. 
Based on these calculations, in the second row of the table, we connect Beilinson coniveu filtration $ N'(H^*) $ with relatively small gap drops. 
  
\begin{table}
\begin{center} 
\begin{tabular}{|c|c|} 
\hline 
\begin{minipage}[c]{2.75in} 
\centering 
\medskip 
 
Jacobian ideal action in super, tropical and standard cases. 
 
\begin{itemize} 
  \item $\alpha\in J_f$ 
  \item $\alpha\in J_f$ super 
  \item $\alpha\in H_{trop}$ 
  \item Defects in $J_f$ determine gaps 
\end{itemize} 
 
$$\includegraphics[width = 0.7\textwidth]{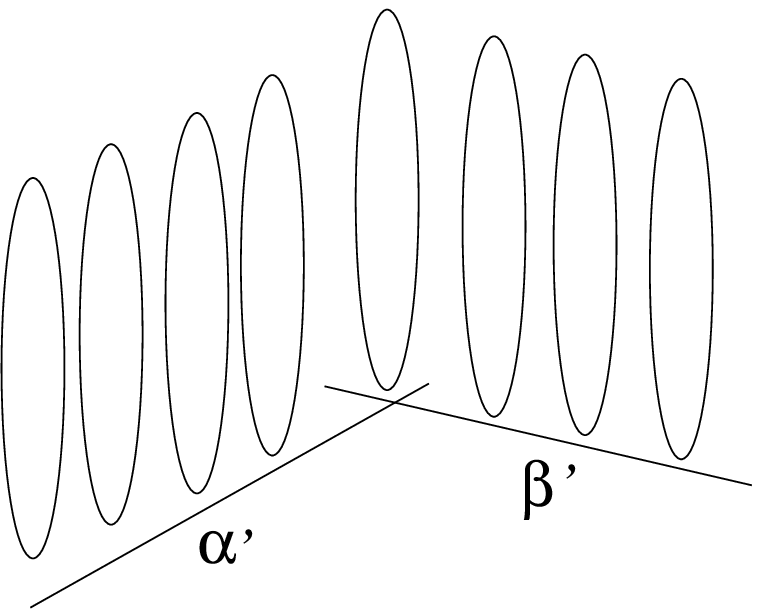}$$ 
 
\medskip 
 
\end{minipage} 
& 
\begin{minipage}[c]{2.75in} 
\centering 
\medskip 
 
$ A\int_\Delta \frac{tr(\alpha,\beta)}{\partial_1w\ldots\partial_nw}=0 $ 
 
\medskip 
 
\end{minipage} 
\\ 
\hline 
\begin{minipage}[c]{2.75in} 
\centering 
\medskip 
 
Beilinson coniveu filtration  
 
$ \alpha \in N'(H^*) $, 
 
$ \alpha $ is the slowest  generator 
 
$ \alpha' $ is a generator connected  
 
with a new $ (p,p) $ class  
 
\medskip 
 
\end{minipage} 
& 
\begin{minipage}[c]{2.75in} 
\centering 
\medskip 
 
The gap for $ \alpha $ is $ n $. 
 
The gap with new class $ \alpha' $ is $ n-1 $ 
 
\medskip 
 
\end{minipage} 
\\\hline 
\end{tabular} 
\end{center} 
\caption{Small gap drops for coniveau elements.} 
\label{tab:HC32} 
\end{table} 
 
Certain strictness properties \cite{FK} may obstruct the appearance of small gap drops and hence yield counterexamples of the Hodge conjecture.  
As will be discussed in \cite{GaravusoKatzarkovNoll}, we may formulate the following conjecture.   
\begin{conj}  
The dimension of supercohomology is determined by Jacobian ring, gaps, and monodromy effect $ ME $.  
\end{conj}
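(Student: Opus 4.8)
The plan is to establish the stated decomposition of supercohomology as a sum of three contributions by working entirely on the B-side and then transporting the answer to the supermanifold through the semiorthogonal decomposition of Theorem \ref{decomposition}. First I would fix a Calabi-Yau supermanifold $ M $ of the type appearing in Examples \ref{Counterexample} and \ref{ExHyper2}, present its structure sheaf as a DG scheme exactly as in Example \ref{super3Dcubic}, and split each supermanifold Hodge number $ h^{p,q}(M) $ as the sum of a \emph{DG-scheme part} and a \emph{categorical part}. The DG-scheme part is computed by the Koszul-type resolution together with Riemann-Roch displayed in Example \ref{super3Dcubic}; this part is governed by the Hochschild cohomology of the associated category of matrix factorizations, which by Conjecture \ref{SuperMatrixFactorization} is read off from the Jacobian ring of the bosonic superpotential. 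This supplies the first of the three claimed ingredients.

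Next I would control the categorical part using the generation-time formula (\ref{tritime(G)}). The point is that the summand $ D^b \left( \mathbf{WCP}^{m-1}_{(Q_1, \ldots, Q_m)[s, q_1, \ldots, q_n]} \right) $ produced by Theorem \ref{decomposition} carries a category of singularities whose Orlov spectrum (Definition \ref{OrlovSpec}) has gaps, and that by Conjecture \ref{SuperHodgeGapCorrespondence} precisely those Hodge numbers ``missing'' from the naive DG-scheme count correspond to the lengths of these gaps. Concretely, for a strong generator $ G $ and a perturbation $ G' $ attached to a new $ (p,p) $ class, the term $ B(G') \cdot l_{G'}(G) $ measures the shift in $ \tritime(G) $, hence the change in gap length; summing these contributions over the relevant classes recovers the categorical part of the supercohomology. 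The monodromy term $ ME $ then enters as the mirror-side correction, accounting for the eigenvalues of the monodromy operator in the way the classical Hodge spectrum does, and it is transported to the A-side via Theorem \ref{decompositionA-side}.

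The hard part will be promoting equation (\ref{tritime(G)}) from a heuristic to a rigorous identity: one must give an intrinsic definition of supercohomology (absent in the literature, as noted in the Introduction), show that $ B(G') $ and $ l_{G'}(G) $ are well-defined invariants of $ \mathrm{End}(G') $ and its action, and verify that the total gap contribution is independent of the chosen generators. I expect the estimate on $ B(G') $, the measure of non-formality, to be the principal obstacle, since controlling the $ A_\infty $-structure on $ \mathrm{End}(G') $ for the deformed categories $ D^b(\cdot)^d $ requires the enhanced notion of spectra alluded to in the preceding remark; this is where the full machinery of \cite{BFK} and \cite{GaravusoKatzarkovNoll} must be invoked. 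A successful resolution would simultaneously settle Conjectures \ref{SuperMatrixFactorization} and \ref{SuperHodgeGapCorrespondence}, and it could be tested directly against the explicit Hodge diamonds computed in Examples \ref{Counterexample} and \ref{ExHyper2}.
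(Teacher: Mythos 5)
The statement you were asked to prove is a conjecture: the paper does not prove it, but formulates it on the basis of work in preparation \cite{GaravusoKatzarkovNoll}, and it explicitly flags the key ingredient of your argument, equation (\ref{tritime(G)}), as ``not yet completely established.'' Your proposal reproduces essentially the same chain of reasoning the paper itself uses to \emph{motivate} the conjecture --- the DG-scheme part read off from the Jacobian ring via Conjecture \ref{SuperMatrixFactorization}, the ``missing'' Hodge numbers matched to gaps via Conjecture \ref{SuperHodgeGapCorrespondence}, and the correction term $ME$ entering through (\ref{tritime(G)}) --- but this does not constitute a proof, because every load-bearing step is itself an unproven conjecture of the same paper. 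An argument that assumes Conjectures \ref{SuperMatrixFactorization} and \ref{SuperHodgeGapCorrespondence} together with the unestablished identity (\ref{tritime(G)}) in order to conclude the statement has only rearranged the conjectural content; nothing has been reduced to established results, and in particular your claim that success would ``simultaneously settle'' those two conjectures exposes the circularity rather than resolving it.

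There is also a foundational gap that you flag but do not close: the quantity whose dimension the conjecture concerns, supercohomology, is given no intrinsic definition in the paper (the introduction stresses the absence of a proper supercohomology theory, and the rigorous framework is deferred to \cite{GaravusoKatzarkovNoll}). Without that definition the statement is not yet a precise mathematical assertion, so no strategy --- including yours --- can be verified against it; the split of $h^{p,q}(M)$ into a ``DG-scheme part'' plus a ``categorical part,'' which your whole plan rests on, is exactly the structure that would have to be constructed and shown well defined (independent of the chosen generator $G$, of the perturbation $G'$, and of the semiorthogonal decomposition in Theorem \ref{decomposition}). Your final paragraph concedes precisely this: making (\ref{tritime(G)}) rigorous, defining $B(G')$ and $l_{G'}(G)$ as invariants, and proving generator-independence is the open problem itself. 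What you have written is therefore a faithful reconstruction of the paper's heuristic motivation for the conjecture, not a proof of it, and it should be assessed as incomplete.
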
 
This conjecture implies that, in Example \ref{Counterexample}, 
$ h^{(1,2)}\left( M^{(q_1,q_2)} \right) = 83 $  when $ (q_1, q_2) \in \{ (1,5),(2,4) \} $. 
Thus, there are possible anomalous cycles.
For a detailed explanation, see \cite{GaravusoKatzarkovNoll,FK}. 
\vfill 
 
\section*{Acknowledgements} 
 
The authors thank V. Bouchard, C. Doran, M. Kontsevich, T. Pantev, and C. Vafa 
for useful discussions. 
R.G. and M.K. received financial support from the Austrian Research Funds (FWF) 
under grant numbers I192 and P21239. 
R.G. received additional financial support from NSERC.  
L.K. was funded by NSF Grant DMS0600800, NSF FRG Grant DMS-0652633, FWF  
Grant P20778, and an ERC Grant. 
A.N. received financial support from FWF Grant P20778. 
 
\begin{appendix} 
 
\section{A-model super CY hypersurface  
\texorpdfstring{$ \Leftrightarrow $}{<-->} B-model quasihomogeneity} 
 
In this appendix, we will prove a theorem concerning the  
quasihomogeneity of $ \widetilde{W} $ given by (\ref{Wtilde}) and  
prove a corollary concerning the $ t \rightarrow - \infty $ limit. 
 
\begin{theorem} 
\label{SCY-QH}  
A member of the hypersurface family 
$$ \mathbf{WCP}^{m-1|n}_{ (Q_1,\ldots,Q_m |q_1,\ldots,q_n)[s] } $$  
is super Calabi-Yau  
if and only if  
$$  
\widetilde{W} = \sum_{i=1}^m  \prod_{j=1}^m y_j^{ M_{ji} } 
    + e^{t/s} \prod_{j=1}^m y_j \prod_{b=1}^n x_b^{-1} 
    + \sum_{a=1}^n 
       \left( 
             1 + x_a^{-2} \hat{\eta}_a \hat{\gamma}_a 
      \right) 
      \prod_{b=1}^n x_b^{ N_{ba} } $$  
is quasihomogeneous of some degree $ s^{\prime} $ for all values of $ t $.  
\end{theorem}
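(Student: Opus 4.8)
The plan is to translate the quasihomogeneity of $\widetilde{W}$ into a finite linear system for the weights of the variables, and then to show that this system admits a nonzero-degree solution precisely when the super Calabi-Yau condition (\ref{CompactSCYH-Condition}) holds. Assign weights $n_{y_j}$, $n_{x_b}$, $n_{\hat\eta_a}$, $n_{\hat\gamma_a}$ to the variables appearing in (\ref{Wtilde}) and demand that every monomial carry a common degree $s'$. The three bosonic families then give: $\sum_{j=1}^m M_{ji}\,n_{y_j}=s'$ for each $i$ (from the monomials $\prod_j y_j^{M_{ji}}$); $\sum_{j=1}^m n_{y_j}-\sum_{b=1}^n n_{x_b}=s'$ (from the middle monomial); and $\sum_{b=1}^n N_{ba}\,n_{x_b}=s'$ for each $a$ (from the constant parts $\prod_b x_b^{N_{ba}}$). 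The fermionic pieces $x_a^{-2}\hat\eta_a\hat\gamma_a\prod_b x_b^{N_{ba}}$ contribute only the auxiliary relations $n_{\hat\eta_a}+n_{\hat\gamma_a}=2\,n_{x_a}$.

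First I would dispose of the fermionic constraints: once the $n_{x_a}$ are fixed, the relations $n_{\hat\eta_a}+n_{\hat\gamma_a}=2\,n_{x_a}$ can always be solved, so they impose no obstruction. Writing $\mathbf{n}_y$ and $\mathbf{n}_x$ for the bosonic weight vectors, the surviving constraints read $M^{\mathsf T}\mathbf{n}_y=s'\,\mathbf{1}_m$ and $N^{\mathsf T}\mathbf{n}_x=s'\,\mathbf{1}_n$, where $\mathbf{1}_m$, $\mathbf{1}_n$ are all-ones vectors. Since $(M_{ji})$ and $(N_{ba})$ are invertible by hypothesis, these determine the weights uniquely in terms of $s'$: namely $\mathbf{n}_y=s'(M^{-1})^{\mathsf T}\mathbf{1}_m$ and $\mathbf{n}_x=s'(N^{-1})^{\mathsf T}\mathbf{1}_n$.

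The crux is to feed these into the middle-term condition $\sum_j n_{y_j}-\sum_b n_{x_b}=s'$. Here I would use the defining relations (\ref{MandNequations}), which in vector form say $M\,Q=s\,\mathbf{1}_m$ and $N\,q=s\,\mathbf{1}_n$, hence $M^{-1}\mathbf{1}_m=Q/s$ and $N^{-1}\mathbf{1}_n=q/s$. The identity $\mathbf{1}^{\mathsf T}(M^{-1})^{\mathsf T}\mathbf{1}=(M^{-1}\mathbf{1})^{\mathsf T}\mathbf{1}$ then gives $\sum_j n_{y_j}=(s'/s)\sum_i Q_i$ and likewise $\sum_b n_{x_b}=(s'/s)\sum_a q_a$. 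Substituting into the middle-term condition yields $(s'/s)\big(\sum_i Q_i-\sum_a q_a\big)=s'$. Under (\ref{CompactSCYH-Condition}) the left-hand side is identically $s'$, so the condition holds for every $s'$ and we obtain a one-parameter family of quasihomogeneous gradings (e.g. $s'=s$); when (\ref{CompactSCYH-Condition}) fails, the same equation forces $s'=0$, i.e. only the trivial grading survives. This proves both implications simultaneously.

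I expect the main subtlety to be conceptual rather than computational: one must argue that a genuine quasihomogeneous structure requires $s'\neq 0$, since invertibility of $M$ and $N$ otherwise forces all bosonic weights to vanish, and one must observe that the clause ``for all values of $t$'' is automatic, because $t$ enters (\ref{Wtilde}) only through the nonvanishing scalar $e^{t/s}$ multiplying the middle monomial, affecting neither its degree nor any weight equation. The linear-algebra step relating $\sum_j n_{y_j}$ to $\sum_i Q_i$ via the transpose-inverse identity is the single place where the precise structure of (\ref{MandNequations}) is used, and I would present it with care.
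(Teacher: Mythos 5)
Your proof is correct and takes essentially the same route as the paper's: both reduce quasihomogeneity to the linear weight system, use (\ref{MandNequations}) to deduce $\sum_j n_{y_j} = (s^{\prime}/s)\sum_i Q_i$ and $\sum_b n_{x_b} = (s^{\prime}/s)\sum_a q_a$, and observe that the middle-term condition then becomes $(s^{\prime}/s)\bigl(\sum_i Q_i - \sum_a q_a\bigr) = s^{\prime}$, which for $s^{\prime} \neq 0$ is exactly the super Calabi-Yau condition. Your departures are purely presentational: you solve the weight system via $M^{-1}$ and $N^{-1}$ and handle both implications at once, and you make explicit the requirement $s^{\prime} \neq 0$ (and the irrelevance of the nonvanishing coefficient $e^{t/s}$), points the paper leaves implicit when it divides by $s^{\prime}$.
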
 
\begin{proof}  
Assume that $ \widetilde{W} $ is quasihomogeneous of degree $ s^{\prime} $ for all values of $ t $. 
It follows that 
\begin{equation} 
\label{QHs} 
s^{\prime} 
   = \sum_{j=1}^m n_{y_j} M_{ji}  
   = \sum_{b=1}^n n_{x_b} N_{ba} 
   = \sum_{j=1}^m n_{y_j} - \sum_{b=1}^n n_{x_b} 
\end{equation}  
and 
\begin{equation} 
\label{QH0} 
0 = -2 n_{x_a} + n_{\hat{\eta}_a} + n_{\hat{\gamma}_a} \, ,  
\end{equation} 
where  
$ n_{y_j} $, $ n_{x_b} $,  
$ n_{\hat{\eta}_a} $, and $ n_{\hat{\gamma}_a} $ 
are the weights of  
$ y_j $, $ x_b $,  
$ \hat{\eta}_a $, and $ \hat{\gamma}_a $, respectively.   
Combining the first two equalities of (\ref{QHs}) with  
(\ref{MandNequations})  
yields 
\begin{equation} 
\label{A-BWeightRelations} 
\sum_{i=1}^m n_{y_i} 
   = \frac{s^{\prime}}{s} \sum_{i=1}^m Q_i \, , 
\quad 
\sum_{a=1}^n n_{x_a} 
   = \frac{s^{\prime}}{s} \sum_{a=1}^n q_a \, . 
\end{equation} 
Using (\ref{A-BWeightRelations}) in (\ref{QHs}) yields 
$$ 
\sum_{i=1}^m Q_i - \sum_{a=1}^n q_a = s \, , 
$$ 
which is the super Calabi-Yau condition. \pagebreak
 
To prove the converse, assume that the super Calabi-Yau condition holds. 
It is always possible to find $ n_{y_j} $ and $ n_{x_b} $  
which satisfy the first two equalities of (\ref{QHs}).   
Furthermore, it is always possible to satisfy (\ref{QH0}).     
Thus, the first and third terms of $ \widetilde{W} $ can always be  
chosen to be quasihomogeneous of some degree $ s^{\prime} $. 
It remains to show that the second term of $ \widetilde{W} $ is  
quasihomogeneous of degree $ s^{\prime} $ for all values of $ t $.   
Combining the first two equalities of (\ref{QHs}) with  
(\ref{MandNequations}) once again yields (\ref{A-BWeightRelations}). 
Using (\ref{A-BWeightRelations}) in the super Calabi-Yau condition yields 
$$ 
\sum_{j=1}^m n_{y_j} - \sum_{b=1}^n n_{x_b} = s^{\prime} 
$$ 
and hence the second term of $ \widetilde{W} $ is quasihomogeneous of  
degree $ s^{\prime} $ for all values of $ t $. 
\end{proof} 
\begin{remark} 
A similiar phenomenon was observed  
\cite{BelhajDrissiRasmussenSaidiSebbar:Toric}  
when the A-model target space is a crepant resolution of 
$$ 
\mathbf{WCP}^{m_1-1|n}_{  
                         ( Q_{11},\ldots,Q_{1 m_1} | q_{11},\ldots, q_{1n} ) 
                       } 
\times \cdots \times 
\mathbf{WCP}^{m_n-1|n}_{ 
                         ( Q_{n1},\ldots,Q_{n m_n} | q_{n1},\ldots, q_{nn} ) 
                       } \, . 
$$ 
\end{remark}
 
\begin{corollary}   
Let $ \widetilde{W} $ be the superpotential of the  
super Landau-Ginzburg orbifold which is mirror to a member  
$ \{ G= 0 \}  $ of the hypersurface family 
$ \mathbf{WCP}^{m-1|n}_{ (Q_1,\ldots,Q_m |q_1,\ldots,q_n)[s] } $.  
Then, in the limit $ t \rightarrow - \infty $, the following are true: 
\newcounter{Lcount} 
\begin{list}{(\roman{Lcount})}{} 
 
\usecounter{Lcount} 
 
\item $ \widetilde{W} $ is quasihomogeneous of some degree $ s^{\prime} $. 
 
\item 
The quantity  
\begin{equation*} 
\hat{c}  
  =   \sum_{i=1}^m 
      \left( 
             1 - \frac{2 Q_i}{s} 
      \right)  
    - \sum_{a=1}^n  
      \left( 
             1 - \frac{2 q_a}{s} 
      \right)  
\end{equation*} 
associated with $ G $ is equal to the quantity 
\begin{equation*} 
\hat{\tilde{c}}  
  =   \sum_{i=1}^m 
      \left( 
             1 - \frac{ 2 n_{y_i} }{s^{\prime}} 
      \right) 
    + \sum_{a=1}^m 
      \left( 
             1 - \frac{ 2 n_{x_a} }{s^{\prime}} 
      \right) 
    - \sum_{a=1}^n 
      \left( 
             1 - \frac{ 2 n_{\hat{\eta}_a} }{s^{\prime}} 
      \right)     
    -  \sum_{a=1}^n 
      \left( 
             1 - \frac{ 2 n_{\hat{\gamma}_a} }{s^{\prime}} 
      \right)  
\end{equation*} 
associated with $ \widetilde{W} $. 
 
\end{list} 
\end{corollary}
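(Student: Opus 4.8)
The plan is to observe that the corollary concerns the $ t \rightarrow -\infty $ limit, in which the factor $ e^{t/s} $ multiplying the second term of $ \widetilde{W} $ tends to zero. Hence in this limit $ \widetilde{W} $ degenerates to $ \sum_{i=1}^m \prod_{j=1}^m y_j^{M_{ji}} + \sum_{a=1}^n ( 1 + x_a^{-2} \hat{\eta}_a \hat{\gamma}_a ) \prod_{b=1}^n x_b^{N_{ba}} $, i.e. exactly the first and third terms of (\ref{Wtilde}). In the proof of Theorem \ref{SCY-QH}, it is precisely the \emph{second} term whose quasihomogeneity forced the super Calabi-Yau condition, so dropping it should let us establish both claims without assuming that condition.

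For part (i), I would recall from the proof of Theorem \ref{SCY-QH} that one can always choose weights $ n_{y_j} $ and $ n_{x_b} $ satisfying the first two equalities of (\ref{QHs}), and weights $ n_{\hat{\eta}_a}, n_{\hat{\gamma}_a} $ satisfying (\ref{QH0}); these choices render the first and third terms quasihomogeneous of a common degree $ s^{\prime} $. Since the second term has disappeared as $ t \rightarrow -\infty $, the surviving $ \widetilde{W} $ is then quasihomogeneous of degree $ s^{\prime} $, giving (i).

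For part (ii), the key is that contracting the first two equalities of (\ref{QHs}) against $ Q_i $ and $ q_a $ respectively and invoking (\ref{MandNequations}) yields (\ref{A-BWeightRelations}), namely $ \sum_i n_{y_i} = (s^{\prime}/s) \sum_i Q_i $ and $ \sum_a n_{x_a} = (s^{\prime}/s) \sum_a q_a $ --- and this holds \emph{unconditionally}, with no super Calabi-Yau hypothesis. I would then expand $ \hat{\tilde{c}} $, collecting the constant terms into $ m - n $, and use (\ref{QH0}) in the form $ n_{\hat{\eta}_a} + n_{\hat{\gamma}_a} = 2 n_{x_a} $ so that the two fermionic weight sums cancel one copy of the $ x $-weight sum and leave a single contribution $ -(2/s^{\prime})( \sum_i n_{y_i} - \sum_a n_{x_a} ) $. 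Substituting (\ref{A-BWeightRelations}) cancels the factors of $ s^{\prime} $ and produces $ (m-n) - (2/s)( \sum_i Q_i - \sum_a q_a ) $, which is exactly $ \hat{c} $.

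The main point --- and the only real subtlety --- is that the corollary must hold with \emph{no} super Calabi-Yau assumption, in contrast to Theorem \ref{SCY-QH}. This is what the $ t \rightarrow -\infty $ limit buys us: the problematic middle term, which alone required the Calabi-Yau condition, is gone, while the identity (\ref{A-BWeightRelations}) that drives the computation in (ii) depends only on the first two equalities of (\ref{QHs}) together with (\ref{MandNequations}), both of which are available regardless. The remaining effort is routine bookkeeping in the fermionic sector via (\ref{QH0}).
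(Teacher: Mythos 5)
Your proposal is correct and follows essentially the same route as the paper: part (i) rests on the observation (from the converse half of the proof of Theorem \ref{SCY-QH}) that the first and third terms of $\widetilde{W}$ can always be made quasihomogeneous of a common degree $s^{\prime}$, with the $e^{t/s}$ term dropping out as $t \rightarrow -\infty$, and part (ii) is the same computation the paper performs --- combining the fermionic sums via (\ref{QH0}) to reduce $\hat{\tilde{c}}$ to $\sum_i (1 - 2n_{y_i}/s^{\prime}) - \sum_a (1 - 2n_{x_a}/s^{\prime})$ and then applying (\ref{A-BWeightRelations}). Your explicit remark that (\ref{A-BWeightRelations}) follows from the first two equalities of (\ref{QHs}) together with (\ref{MandNequations}) alone, so no super Calabi-Yau hypothesis is needed, is a correct and slightly more transparent articulation of what the paper leaves implicit.
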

 
\begin{proof} 
\newcounter{LLcount} 
\begin{list}{(\roman{LLcount})}{} 
 
\usecounter{LLcount} 
 
\item[]
 
\item This follows from the discussion in the second paragraph of the  
proof of Theorem \ref{SCY-QH}.
 
\item This follows by combining the terms in $ \hat{\tilde{c}} $ which are  
summed over $ a $ and then using (\ref{QH0}) and  
(\ref{A-BWeightRelations}) in turn, i.e. \pagebreak
\begin{align*} 
{\hat{\tilde{c}}}  
  &= \sum_{i=1}^m 
     \left( 
            1 - \frac{ 2 n_{y_i} }{s^{\prime}} 
     \right) 
   - \sum_{a=1}^m 
     \left[ 
             1  
           - \frac{2}{s^{\prime}}  
             \left(  
                    n_{\hat{\eta}_a} + n_{\hat{\eta}_a} - n_{x_a}  
            \right)  
     \right]     
\\
  &= \sum_{i=1}^m 
     \left( 
            1 - \frac{ 2 n_{y_i} }{s^{\prime}} 
     \right) 
   - \sum_{a=1}^n 
     \left( 
            1 - \frac{ 2 n_{x_a} }{s^{\prime}} 
     \right) 
\\
  &= \sum_{i=1}^m 
      \left( 
             1 - \frac{2 Q_i}{s} 
      \right) 
    - \sum_{a=1}^n 
      \left( 
             1 - \frac{2 q_a}{s} 
      \right)  
   = \hat{c} \, . \qedhere
\end{align*}   
 
\end{list} 
 
\end{proof}

\end{appendix}


\end{document}